\def \d {{\text{d}}}
\newtheorem{theorem}{Theorem}
\newtheorem{proposition}{Proposition}
\newtheorem{definition}{Definition}
\DeclareMathOperator*{\argmin}{arg\,min}
\begin{document}
\title{Resilient UAV Traffic Congestion Control using Fluid Queuing Models}

\author{Jiazhen Zhou, Li Jin, Dengfeng Sun
\thanks{This work  has been supported by Purdue Center for Resilient Infrastructures, Systems, and Processes, and NYU Tandon School of Engineering.}
\thanks{ Jiazhen Zhou and Dengfeng Sun are with the School of Aeronautics and Astronautics, Purdue University, West-Lafayette, IN 47906 USA; Emails: \texttt{\{zhou733, dsun\}@purdue.edu}. Li Jin is with the Department of Civil and Urban Engineering, New York University Tandon School of Engineering,Brooklyn, NY, USA; Email: \texttt{lijin@nyu.edu}.}}

\maketitle

\begin{abstract}
	In this paper, we address the issue of congestion in future Unmanned Aerial Vehicle (UAVs) traffic system in uncertain weather. We treat the traffic of UAVs as fluid queues, and introduce models for traffic dynamics at three basic traffic components: single link, tandem link, and merge link. The impact of weather uncertainty is captured as fluctuation of the saturation rate of fluid queue discharge (capacity). The uncertainty is assumed to follow a continuous-time Markov process. We define the resilience of the UAV traffic system as the long-run stability of the traffic queues and the optimal throughput strategy under uncertainties. We derive the necessary and sufficient conditions for the stabilities of the traffic queues in the three basic traffic components. Both conditions can be easily verified in practiceB. The optimal throughput can be calculated via the stability conditions. Our results offer strong insight and tool for designing flows in the UAV traffic system that is resilient against weather uncertainty.
\end{abstract}

\section{Introduction}

Unmanned Aerial Vehicles (UAVs) are undergoing a vast development in recent years. By January 2018, over one million UAVs had been registered~\cite{UAV_NUM}. Commercial UAVs, for both commodity and passenger transport, are being developed and tested. Amazon has issued its patent on UAV package delivery service~\cite{kimchi}. Uber has initiated a project on urban aerial taxi~\cite{uber}. With the increasing number of UAVs applied to various tasks in urban areas, there is an urgency to integrate the UAVs into current traffic systems~\cite{balac2017towards,kopardekar2017safely}. Current transportation researches have been focused on UAVs assisted traffic~\cite{uav_assis_intelligent,uav_photo,uav_surface}, to name a few. However, UAVs themselves can form traffic in the urban airspace, and congestion can happen when the inflow exceed the capacity in the airway. UAVs, by nature, are more sensitive to weather changes. Congestion control under weather uncertainty make the UAV traffic management more challenging than other traffic.

In this paper, the problem of resilient congestion control of future UAV traffic system in urban airspace under weather uncertainty is considered. The goal of this paper is to develop a system-theoretic framework for modeling, analyzing and designing of a resilient congestion control strategy for future UAV traffic system. The main tasks of this paper are: (1) developing realistic and tractable models for UAV traffic flow, (2) analyzing the system level performance of UAV traffic in the merits of resilience. The modeling is split into three parts of work: model the UAV traffic at macro scope(system-level) for single link, integrate the single link model into traffic networks, model the weather uncertainty and its impact on the UAV traffic. For analyzing, we introduce the definition of resilience for UAV traffic, and we derive the necessary and sufficient conditions for the UAV traffic in each traffic link model to satisfy the definition.

 We adopt one class of traditional system-level model for air traffic dynamics that has been used in air transportation community, that is, the stochastic queuing model to capture the traffic dynamics in the airway. In a single queue model, each UAV is considered as a 'customer',  the airway~(link) the UAVs travel and stay in is consider as 'buffer', and finally the UAVs departure from a link at the end of the link called 'server'. The number of UAVs leaving a queue at any time instance is upper bounded by the saturation discharge rate~(capacity) of the link. The level of congestion is characterized as the queue length. In particular, we will be using the fluid queuing model. We do realize that other queuing models(e.g., M/M/1, D/E/1)~\cite{newell2013applications} are used in air traffic management~\cite{peterson1995models}, the major part of these models focus on modeling the arrival of individual customer~\cite{departurequeue,wandynamic}. Our focus is the system-level behavior of the traffic, which is the evolution of aggregated flow. Thus, fluid queuing model better fits our purpose. The stochastic differential form of the fluid queue dynamics also offers tractability in analysis.

 To extend the result from single link/queue to general queuing network, we pay our attention to modeling traffic dynamics in two additional link models: tandem link, and merge link. Tandem link, merge link, and split link are three basic link models in traffic networks. Any possible network is combinations of these three link models~\cite{van2007modeling}. For the split link, routing policy or different destinations need to be considered, which is outside the scope of the paper. In the single link or tandem link, no control action need to be taken under the default setting. In the merge link model, we considered proportional capacity allocation control policy. Based on our models, we derive necessary and sufficient conditions for stability under constant inflow. Apart from the existing results on fluid queuing models~\cite{mitra,field2007approximate}, we considered state dependent discharge rate when modeling the traffic dynamics in tandem link and merge link. This is motivated by the observation that the amount of traffic in the upstream link discharged into downstream link can be affected by the existing amount of traffic already in the links. The rule describing this observation is sometimes referred as fundamental diagram of traffic flow~\cite{mannering2007principles}.

Properly modeling the weather uncertainty and the impact of weather uncertainty is crucial to decision making in UAV traffic management. High resolution ensemble model based weather are utilized in predicting the weather for strategic planning in aviation systems~\cite{steiner2010translation}. However, such model is infeasible for our purpose for three reasons. First, we are considering UAV traffic in the urban area, where weather measurements/predictions is coarse spatially. The urban area can be considered as a single weather zone, in which the weather condition can be considered uniform. Ensemble model is simply not necessary for our case. Second, the statistical properties of the weather dynamics, such as the frequency of weather change, are lost in the ensemble model. The statistical property is critical to provide tractable system-theoretical analyze for the UAV traffic. Third, the ensemble model is simply to expensive to compute. To address these issues, continuous-time Markov chain model is used to model the switching of weather conditions. Despite the fact that the weather dynamics evolves according to complex partial differential equations, we are interested in only a few nominal weather conditions, such as sunny, windy, rainy etc,., which are meaningful for UAV traffic management system. Therefore, the continuous-time Markov chain model is further modeled to be a finite-state one.

Simple first-order Markov chain model is able to closely predict the weather pattern~\cite{wilks1999weather}. Some articles argue that the memory-less first order Markov chain model is not enough for accurate weather prediction, and proposed second-order Markov chain to achieve more accurate predictions~\cite{gates1976markov}. That being addressed, higher order finite-state Markov chain model can be transfered in to a first order one by expanding the state space~\cite{berchtold2002mixture}. Therefore, in this paper, we assume that a first order Markov chain weather model is obtained to closely predict the weather pattern, so that we can focus on the analysis for the traffic dynamics under weather uncertainty. The Markovian formulation of the weather uncertainty in air traffic study can also be found in~\cite{zhou2012performance,zhou2011stochastic}. The impact of the weather on the UAV traffic flow is modeled as the fluctuations in the capacities of traffic queues. Bad weather reduces the the maximum number of UAVs can departure from a traffic queue. Thus we let each of the weather conditions corresponds to an operational mode of the traffic queues, each operational mode defines the capacities of the queues in the traffic system. The queue length does not jump during mode switching. The resulting queuing model is called piecewise-deterministic queuing model~\cite{jin2019analysis}. Since the mode is switching according to a finite state Markov chain, our model also belongs to the class of piecewise-deterministic Markov Process\cite{davis1984piecewise}.

The term of resiliency has various meanings in control system context. The resilience considered under UAV traffic setting is defined as: (1) robust against random perturbations, (2) efficiency with robustness. The random perturbation for UAV operation will mainly come from the weather uncertainty. Based on the stochastic queuing models, the questions we try to answer is: (1) under what conditions will the traffic queue be bounded under capacity fluctuation? (2)  How can we improve the throughput of aforementioned traffic link models? To answer the first question, we first introduce the notion of stability in term of bounded moment generating function of queue length. This general notion of stability guarantees the boundedness of the first moment (expectation), which is the quantity of interest, as well as higher order moments.  Other stability notion may be used such as asymptotic stability, which implies that the queue will converge to equilibrium point with probability 1~\cite{chatterjee2007studies}.  However, we stress the fact that converging to a certain equilibrium may not have practical meaning in traffic applications. The necessary condition for each model is intuitive; it states that the inflow rate must be less or equal to the time average of the capacity of each queue. The derivation for sufficient conditions is based on well-known Foster-Lyapunov drift condition on stability analysis for Markov processes~\cite{2013stochastic,meyn1993stability}. Based on the form of the Lyapunov function we choose, we end up with sufficient conditions in bilinear inequality form which can be solved numerically. Same stability analysis considering different models in ground traffic management setting can be found in~\cite{jin2019analysis,jin2019behavior}. With the stability results we derived, we are able to answer the second question. We show that the a sequence of bilinear inequality can be solved to maximize the throughput of each model when stability is guaranteed. Feedback control or stochastic inflow can be considered under the same framework in future research.

Note that our models in differential equation form is based on the underlying fluid queuing model. This is motivated by the vast number of UAVs that will put into service in the urban airspace~\cite{kopardekar2017safely,uav_prospective}, and the dense traffic they will create~\cite{balac2017towards}. The continuous-time model can capture the fast evolution of aggregated flow in UAV traffic system. Our definition of resilience and analysis are not limited to the continuous-time model. Discrete-time models can be proposed and stability conditions can be derived under discrete version of Foster-Lyapunov drift condition~\cite{meyn1992stabilitydiscrete}. The main contribution of this paper is summarized in the following:
\begin{enumerate}
	\item We develop models for the UAV traffic dynamics under weather uncertainty in three basic components in traffic networks: single link, tandem link, and merge link, based on fluid queuing model. The weather uncertainty is modeled as continuous-time Markov process. The uncertainty impact is captured by fluctuation of capacity for each queue.

	\item We introduce the notion of resilience in terms of the stability and optimality of the queuing system under uncertainty. The stability is defined as boundedness of moment generating function of the queue length. We derive the necessary and sufficient conditions for the traffic queues to be stable in the three traffic models. The necessary condition for each model is intuitive; it states that the inflow rate must be less or equal to the time average of the capacity of each queue. The sufficient conditions are in bilinear inequality form, which can be checked by optimization softwares. Optimality is defined as maximum constant inflow rate with stability guarantee. We show that the optimal flow rate can be calculated by solving a sequence of bilinear inequality problems. Numerical examples verify our theoretical results, and illustrate the impact of uncertainty on optimal flow.

\end{enumerate}

\section{Preliminaries}\label{prelimiaries}
We first introduce mathematical notations used throughout this paper. Denote $\mathbb{R}^n_{\geq 0}$ as real vector space with dimension $n$, with each element greater or equal to $0$. Denote operator $|\cdot|$ as the cardinality if a set is considered, as 1-norm if vector space is consider unless otherwise specified. Let $\mathbb{E}(X)$ be the expectation of random variable $X$, $\exp(X)$ as the exponential function of $X$.

We introduce basic notations for fluid queuing model for UAV traffic. The most basic element in queuing system is a single queue, consisting of a buffer and a server. The buffer stores queue, and may have finite or infinite buffer size. Fluid queue receives the traffic at nonnegative inflow rate $r(i,q,t)$ [veh/hr] (vehicles per hour) and the server discharge the traffic flow at nonnegative rate $f(i,q,t)$ [veh/hr],  where $i\in \mathcal{I}$\ is the $i_{th}$ operation mode, $q$ [vel] is the existing traffic in the buffer. Let set of the modes be $\mathcal{I}$ and $|\mathcal{I}|=m$. The discharge rate is minimum of the sending flow rate $s(i,q,t)$ [veh/hr] and the mode dependent capacity $c_i$. Let uppercase $Q\in \mathcal{Q}$ and $I\in \mathcal{I}$ denote the continuous and discrete stochastic state of the stochastic process, where feasible $\mathcal{Q}$ is defined on $\mathbb{R}_{\geq 0}$. The state space of the queuing system is the Cartesian product: $\mathcal{Q}\times \mathcal{I}$. Dynamics of the queue length is then defined in an ordinary differential equation form:
\begin{align}\label{basic_queue}
	\dot{Q}(i,q,t) = r(i,q,t)-f(i,q,t),\\
	f(i,q,t) = \min\left(s\left(i,q,t\right),c_i\right).
\end{align}
This is a general description of fluid queue dynamics. This can also be used to describe the queuing system of multiple queues by considering $Q$ as a vector. 
Figure {\ref{single_queue_fig}} demonstrates single fluid queue model. In the rest of this paper, all the traffic queue system will be one or combination of the single fluid queuing models.\\
\begin{figure}[H]
	\centering
	\includegraphics[scale = 0.5]{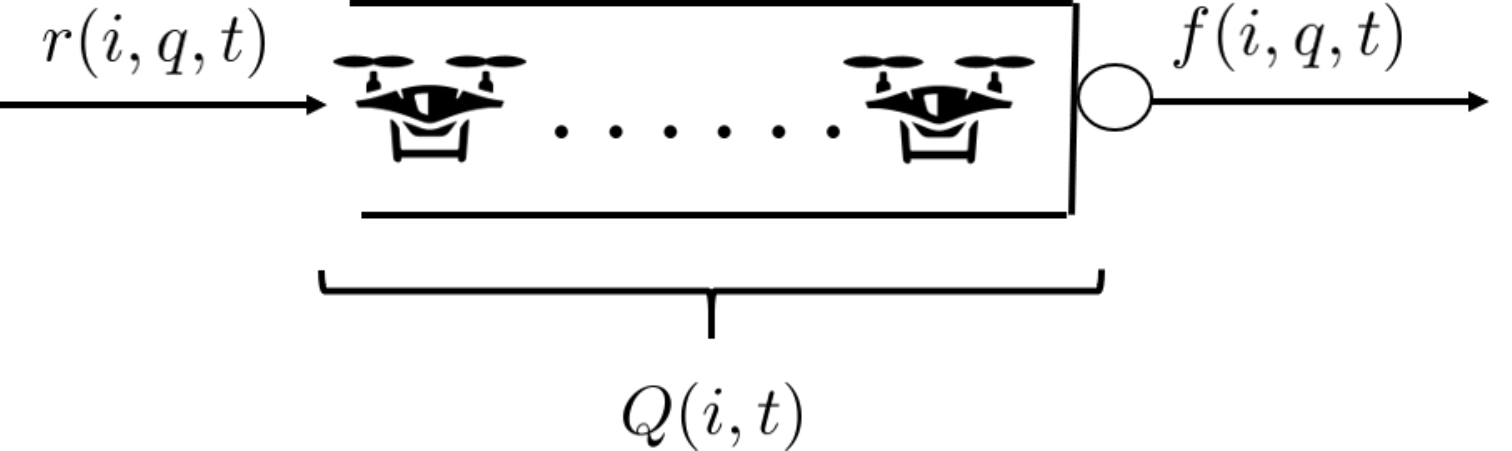}
	\caption{Single Fluid Queuing Model}
	\label{single_queue_fig}
\end{figure}
The uncertainty of system comes from the weather change, This is captured by the jump of mode $I$. Jumping between operation modes is assumed follow a continuous-time Markov process. Let the governing transition matrix be:
\begin{equation}\label{transition_matrix}
\Lambda = \begin{bmatrix}
\lambda_{11} &\lambda_{12} &\cdots &\lambda_{1m}\\
\lambda_{21} & \lambda_{22} &\cdots &\lambda_{2m}\\
\vdots &\vdots &\ddots &\vdots\\
\lambda_{m1} &\lambda_{m2} &\cdots &\lambda_{mm}
\end{bmatrix},
\end{equation}
where $\lambda_{ij},i,j\in \mathcal{I}, i\neq j$ are the time invariant jumping rate, $\lambda_{ii}=-\sum_{j\neq i}\lambda_{ij}$. We make further assumption that the process is ergodic and irreducible. This admits that the Markov process has a unique steady state distribution that satisfies the linear system:
\begin{equation}\label{transass}
p\Lambda=0,
\end{equation}
where $p=[p_1,p_2,\dots,p_m]\in \mathbb{R}^{m}_+$,and $|p|=1$~\cite{essentials}.

To define the resilient performance of the queue under uncertainty, we adopt the notion of stability in the sense of the bounded moment generating function of queue length. The formal definition is following:
\begin{definition}
	Fluid queue is said to be stable if for any initial condition $(i(0),q(0))$, there exists $C<\infty$, such that:
	\begin{align}\label{stable}
		\limsup_{t\to\infty}\frac{1}{t}\int_{ 0}^{t} \mathbb{E}[\exp(|Q(\tau)|)]d\tau\leq C.
	\end{align}
\end{definition}
This notion of stability is first introduced by Dai and Meyn~\cite{dai1995stability}. Note that expanding the exponential term using Taylor series yields boundedness for the time average of expected queue length, which shows more practical meaning.

\section{Main Results}\label{main_result}
In this section, we will derive the necessary and sufficient conditions for single queue, tandem queue, and merge queue. For latter two queuing models, we adopt the fundamental diagram of traffic to model state dependent discharge rate of queue and study the spillback effect.
\subsection{Single Queue}
Here we consider the simplest case where UAVs are congested at one point. Such case is very common in the traffic system; it can be at the ground station/warehouse where there is high volume of take off demand. In this case, we assume that the discharge of single fluid queue is simple on/off process, which is described below. We do not impose any more complicated model for single congestion because for ground station, the discharge rate is less affected by the amount of demand/inflow at ground station, UAVs can always be released at capacity at each mode. The flow diagram is:
\begin{equation}\label{flowrate}
f(i,q) = \left\{\begin{matrix}
c_i, q>0\\
\min(c_i,a), q= 0
\end{matrix}\right.
\end{equation}
where $c_i$ is the discharge rate configuration for mode $i\in \mathcal{I}$ with unit of vehicles per hour, $a$ is the constant inflow rate of UAVs.
Then the UAVs queue at single point evolves according the dynamics:
\begin{equation}\label{dynamics}
	\dot{Q}(t) = a-f(i,q)
\end{equation} 
 Under this model, there are rich theoretical literatures discussing the behavior of the queue, and we apply the results in following:
 
\begin{theorem}\label{theorem_single}
The sufficient and necessary conditions for (\ref{stable}) to hold is:
\begin{equation}\label{necessary}
	\sum_{i=1}^{m}p_ic_i\geq a.
\end{equation}
\end{theorem}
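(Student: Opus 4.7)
The plan is to prove the two directions separately. For necessity, I would assume $\sum_i p_i c_i < a$ and derive that the moment generating function blows up. The argument rests on ergodicity and irreducibility of the CTMC: by the Birkhoff ergodic theorem, $\frac{1}{t}\int_0^t c_{I(\tau)}\,d\tau \to \sum_i p_i c_i$ almost surely. Since the instantaneous outflow $f(i,q)$ never exceeds $c_i$, the sample paths satisfy $Q(t) \geq Q(0) + at - \int_0^t c_{I(\tau)}\,d\tau$, which grows linearly with a strictly positive slope almost surely. This lower bound forces $\mathbb{E}[\exp(|Q(t)|)]$ to diverge geometrically fast, contradicting \eqref{stable}.

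For sufficiency (assuming strict inequality first), I would invoke the Foster-Lyapunov drift criterion for piecewise-deterministic Markov processes. Define the candidate Lyapunov function $V(q,i) = g_i\,\exp(\theta q)$ with a positive vector $g=(g_1,\ldots,g_m)$ and scalar $\theta>0$ to be chosen. On $\{q>0\}$ the extended generator of $(Q,I)$ applied to $V$ evaluates to
\begin{equation*}
\mathcal{L}V(q,i) \;=\; \exp(\theta q)\Bigl[\theta(a-c_i)g_i + \sum_{j}\lambda_{ij}g_j\Bigr] \;=\; \exp(\theta q)\,\bigl(M(\theta)g\bigr)_i,
\end{equation*}
where $M(\theta) := \theta\,\mathrm{diag}(a-c_1,\ldots,a-c_m) + \Lambda$. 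I would then show that $(\theta,g)$ can be chosen so that every coordinate of $M(\theta)g$ is strictly negative. Since $M(0)=\Lambda$ has Perron eigenvalue $0$ with right eigenvector $\mathbf{1}$ and left eigenvector $p$, irreducibility lets me apply standard perturbation theory to obtain a smooth branch of dominant eigenvalues $\mu(\theta)$ with $\mu'(0)=p^\top(a-c)=a-\sum_i p_i c_i<0$ and positive right eigenvectors $g(\theta)$. Taking $\theta>0$ small enough then yields $\mathcal{L}V \leq \mu(\theta)\,V$ on $\{q>0\}$ with $\mu(\theta)<0$.

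The remaining step is to upgrade this pointwise drift into a global Foster-Lyapunov inequality $\mathcal{L}V \leq -\gamma V + K\mathbf{1}_{\{q=0\}}$ and apply Dynkin's formula. At $q=0$ the outflow is $\min(c_i,a)$ instead of $c_i$, so the drift picks up an extra nonnegative term; however, $V(0,i)=g_i$ is bounded on this compact boundary, so the excess is absorbed into the constant $K$. Integrating along sample paths and taking expectations gives $\mathbb{E}[V(Q(t),I(t))] \leq V(Q(0),I(0))\,e^{-\gamma t} + K/\gamma$, which immediately implies the time-averaged exponential-moment bound of Definition~1. The main obstacle I expect is the Perron-Frobenius perturbation step: verifying that for the irreducible generator $\Lambda$ the dominant eigenvalue of the perturbed matrix $M(\theta)$ depends smoothly on $\theta$ with a strictly positive eigenvector, so that the linear expansion $\mu(\theta)=\theta(a-\sum_i p_i c_i)+O(\theta^2)$ is valid. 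The boundary case $\sum_i p_i c_i=a$ is delicate because the exponential Lyapunov function no longer decays to first order; here I would either approximate by a sequence of subcritical inflows or rely on the inclusive convention used by the authors when stating the theorem.
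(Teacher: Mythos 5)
Your necessity argument is essentially the paper's, just packaged as a contrapositive: both rest on the ergodic time-fraction limit $T_i(t)/t\to p_i$ and the bound $f(i,q)\le c_{I(\tau)}$ applied to the integrated dynamics. The sample-path inequality $Q(t)\ge Q(0)+at-\int_0^t c_{I(\tau)}\,d\tau$ is correct and gives a clean linear lower bound when $a>\sum_i p_ic_i$. For sufficiency the routes genuinely diverge: the paper does not prove sufficiency at all, it simply cites the fluid-queue literature (\cite{frontiers}), whereas you construct an explicit Foster--Lyapunov certificate $V(q,i)=g_ie^{\theta q}$ and obtain the drift via Perron--Frobenius perturbation of $M(\theta)=\theta\,\mathrm{diag}(a-c)+\Lambda$ about the simple eigenvalue $0$ of $\Lambda$. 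Your generator computation is right, $\mu'(0)=a-\sum_ip_ic_i$ is the correct first-order term, and the boundary term at $q=0$ is indeed a bounded excess absorbable into the constant $K$. This is more self-contained than the paper and is in the same spirit as the Lyapunov machinery the authors deploy for Theorems 2 and 3 (there with scalar weights $\alpha_i$ and a bilinear inequality rather than an eigenvector).

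Two gaps remain. First, your drift inequality bounds $\mathbb{E}[e^{\theta Q(t)}]$ for \emph{small} $\theta>0$, but Definition 1 asks for the exponential moment at rate exactly $1$; since $e^{\theta q}\le e^{q}$ for $\theta\le 1$, controlling the former does not control the latter. To close this you would need $\mu(1)<0$ (an extra condition on how far $a$ sits below $\sum_ip_ic_i$), or to read the definition as requiring boundedness of \emph{some} exponential moment — a looseness the paper itself shares, since its own Lyapunov functions carry a free parameter $\beta$. Second, your treatment of the critical case $\sum_ip_ic_i=a$ does not work: stability is not preserved under taking a limit of subcritical inflows, and in fact the critical Markov-modulated fluid queue behaves like a zero-drift random walk, so no exponential moment is bounded there. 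The honest conclusion is that sufficiency holds only under strict inequality; the ``$\ge$'' in the theorem statement is itself suspect at the boundary, and you are right to flag it, but neither of your proposed fixes resolves it.
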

\begin{proof}
See Appendix~\ref{proof th 1}.
\end{proof}
The queue of UAVs is stable if and only if the inflow rate is less or equal to the time average of capacity for piecewise constant discharge rate, which follows the intuition.
For single fluid queue, we are able to present the steady state distribution for the queue length based on previous works~\cite{frontiers}~\cite{mitra}. Denote probability:
\small
\begin{equation*}
\begin{aligned}
	P(q,i) = &\lim_{t\to \infty} Pr\big(Q(t)\leq q, I(t)=j|Q(0)=q_0,I(0)=i\big),\\
\end{aligned}
\end{equation*}
\normalsize
and let $F(q) = [P(q,1),P(q,2),\dots,P(q,m))]$.
The spectral representation of $F(q)$ is given by:
\begin{align}
F(q) = \exp(l q)\phi,
\end{align}
where finding $l\in \mathbb{R}, \phi \in \mathbb{R}^m$ is associated with a generalized eigenproblem~\cite{frontiers}. This shows that single queue model serves as a strong tool not only in deriving conditions for bounded queue, but also giving estimations for the queue length.

\subsection{Tandem Queue with State Dependent Processing Rate}
In the air traffic system, aircrafts travel in virtual paths called links. Links define the allowable space the aircrafts can travel in. It is common that links are connected to each other. Here we pay our attention to two connected link, and study the spillback effect.
In~\cite{Jin19Trb}, Jin et al. studied the spillback effect in traffic with tandem fluid queuing model. The tandem fluid queue consists of two connecting queue where the outflow of the upstream queue is the input of the downstream queue.  The discharge rates in~\cite{Jin19Trb} is piecewise constant. In this section, we establish a new model to closer describe the nature of traffic behavior at congestion point by adopting the fundamental diagram~\cite{mannering2007principles} and study its stability conditions.

Consider a system with two queue connected in series. Traffic arrives at the upstream link with constant flow rate $a$. The outflow of upstream queue then becomes the inflow of downstream queue. The continuous stochastic state variable of the system is $Q(t) = [Q_1(t) ,\, Q_2(t)]^T$, and $q = [q_1,\, q_2]^T$ is the realization of $Q$ or existing queue length in the system. By convention, $Q\in \mathcal{Q}, \mathcal{Q}=\mathbb{R}^2_{\geq 0}$ defines the state space for continuous stochastic state. The diagram of tandem fluid queue is shown in Figure \ref{tandem_queue_fig}. 
\begin{figure}[H]
	\centering
	\includegraphics[scale = 0.32]{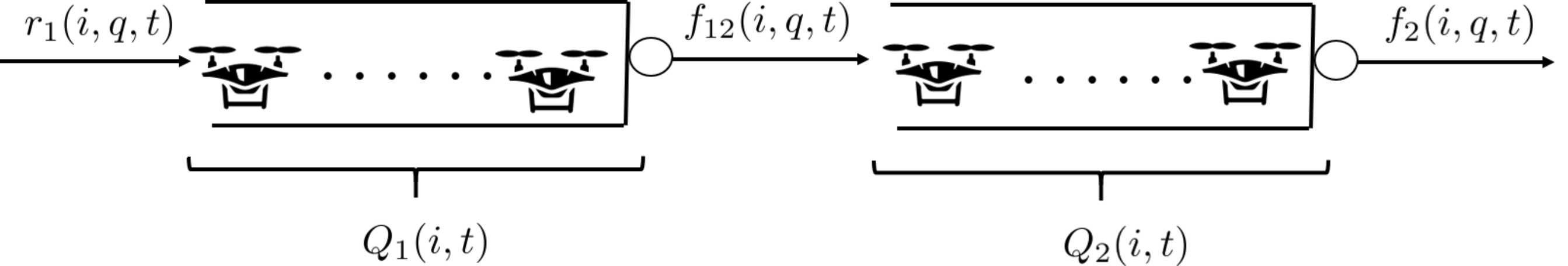}
	\caption{Tandem Fluid Queuing Model}
	\label{tandem_queue_fig}
\end{figure}
To incorporate with the fundamental traffic model, we need to change the units of the states of fluid queue model. We interpret state queue $Q$ as the traffic density of links that has the unit of vehicle per mile [veh/mile] rather than the number of waiting vehicles. Define the free flow speed of UAVs is $v$ [miles/hr], and the congested UAVs speed is $w$ [miles/hr]. Upstream queue has infinite buffer size and downstream queue has finite buffer size $\theta$ [veh/mile]. Therefore, the state takes value in the space defined by Cartesian product: $[0,+\infty)\times[0,\theta]$. The sending rates $s_1,f_2$ [veh/hr] increase with the amount of traffic in the queue, bounded by the capacity.  The flow that the downstream can receive $r_2$ decreases with the amount of traffic that is already in the queue. If downstream queue is full, then no flow can be sent from upstream to downstream. The actual flow from upstream to downstream $f_{12}$ is the minimum of $s_1$ and $r_2$.  Let $i(t)$ denote the mode at time $t$, and the associated capacity is $[c_{1i}, c_{2i}]$, where $c_{1i}$ and $c_{2i}$ is the capacity of upstream and downstream queue at mode $i$ respectively. The flow diagram is:
\begin{equation}\label{flow_diagram}
\begin{aligned}
r_1 &= a,\\
s_1(i,q) &= \min(vq_1,c_{1i}),\\
r_2(q) &= w(\theta-q_2),\\
f_{12}(i,q) & = \min(s_1,r_2),\\
f_2(i,q) &= \min(vq_2,c_{2i}).
\end{aligned}
\end{equation}
More about fundamental diagram of traffic can be found in~\cite{mannering2007principles}.   We assume that the maximum discharge rates among all modes for both queues are the same, that is:
\begin{align}
	c^{\max} = \max_{i\in \mathcal{I}} c_{ji}\quad j = 1,2.
\end{align}
And denote $c_j^{\min}=\min_{i\in \mathcal{I}}c_{ji}$.
Follow from~\cite{SSCTM}~\cite{CTM}, we also assume:
\begin{equation}
\begin{gathered}
	q_c = \frac{c^{\max}}{v},\\
	\max_{i\in \mathcal{I}}s_{1}(q_c,i)\leq r_2(q_c).
\end{gathered}
\end{equation}
along with (\ref{flow_diagram}), this gives:
\begin{align}\label{critic flow}
	c^{\max}\leq\frac{vw}{v+w}\theta.
\end{align}
Thus, by mass conservation, the evolution of system states is:
\begin{align}\label{queue_dynamics}
\dot{Q}(i,q)=F(i,q) = \begin{bmatrix} 
a-f_{12}(i,q)\\
f_{12}(i,q)-f_2(i,q)
\end{bmatrix}.
\end{align}
It can be seen that, the system is piecewise affine and continuous at each mode therefore locally Lipschitz continuous. Given the switching signal follows a irreducible and ergodic Markov process, therefore nonexplosive, unique solution for (\ref{queue_dynamics}) exists~\cite{mao1999stability}. Let the solution for system (\ref{queue_dynamics}) be $q(t)$, (\ref{queue_dynamics}) can be written as:
\begin{align}\label{time_diff}
	\dot{Q}(i,t) = \begin{bmatrix}
	a-f_{12}(t)\\f_{12}(t)-f_2(t)
	\end{bmatrix},
\end{align}
where at the derivative at jumping instance are defined as right limit. 
The queue is said to be stable if (\ref{stable}) holds.
Note that the described system is continuous, hence is also right continuous with left limit (RCLL). Therefore, the infinitesimal generator of $V(i,q)$ is given by:
\begin{align} \label{infinitesimal}
	\mathcal{L}V(i,q) = \frac{\partial V(i,q)}{\partial q}F(i,q)+\sum_{j=1}^{m}\lambda_{ij}V(i,q),
\end{align}
for any smooth function $V$ in the continuous argument~\cite{davis1984piecewise}.

In order to derive the stability conditions, we first construct the invariant set for continuous state variables, and show the sufficient condition for stability for all initial conditions in the invariant set. A set $\tilde{\mathcal{Q}}\subseteq \mathcal{Q}$ is invariant if:
\begin{align}
	\forall (q,i)  \subseteq \tilde{\mathcal{Q}}\times\mathcal{I},\, \forall t\geq 0,\, q(t)\in \tilde{\mathcal{Q}}.
\end{align}
The construction of the invariant set follows the following idea. Let the invariant set be Cartesian product $\tilde{\mathcal{Q}} = [\underline{q_1},\infty]\times[\underline{q_2},\overline{q_2}]$, such that the states governed by (\ref{dynamics}) is non-decreasing when $q_1\geq\underline{q_1},q_2\geq \underline{q_2}$ and non increasing when $q_2\leq \overline{q_2}$. 
\begin{proposition}\label{prop_tandem}
	For tandem fluid queue, with constant inflow vector $a\in R$ the set $\tilde{\mathcal{Q}} = [\underline{q_1},\infty]\times[\underline{q_2},\overline{q_2}]$ is invariant with boundaries defined as follows:
	\begin{equation}\label{q1bound}
	\begin{aligned}
		\underline{q_1} &=\min\{\frac{a}{v}, \frac{c^{\max}}{v}\} ,\\ 
		\underline{q_2} &= \min\{\underline{q_1},\frac{c_1^{\min}}{v} \},\\
		\overline{q_2} &= \theta-c_2^{\min}/w.
	\end{aligned}
	\end{equation}
\end{proposition}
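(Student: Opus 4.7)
The plan is to verify that $\tilde{\mathcal{Q}}$ is forward-invariant by checking, on each of its three boundary faces, that the vector field $F(i,q)$ in (\ref{queue_dynamics}) points into (or is tangent to) $\tilde{\mathcal{Q}}$ for \emph{every} mode $i\in\mathcal{I}$. Since $F(i,\cdot)$ is continuous and piecewise affine in $q$, and the Markov switching only jumps $I(t)$ while leaving $Q(t)$ unchanged, this uniform tangent-cone condition at $\{q_1=\underline{q_1}\}$, $\{q_2=\underline{q_2}\}$, and $\{q_2=\overline{q_2}\}$ is enough to conclude invariance.

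The two ``one-queue'' faces are direct. On $\{q_1=\underline{q_1}\}$, the definition $\underline{q_1}=\min\{a/v,c^{\max}/v\}$ gives $v\underline{q_1}\leq a$, so $f_{12}\leq s_1(i,q)=\min(vq_1,c_{1i})\leq v\underline{q_1}\leq a$ and hence $\dot{Q}_1=a-f_{12}\geq 0$. On $\{q_2=\overline{q_2}\}$ with $\overline{q_2}=\theta-c_2^{\min}/w$, we have $r_2=w(\theta-\overline{q_2})=c_2^{\min}$, so $f_{12}\leq c_2^{\min}$; a short use of (\ref{critic flow}) gives $v\overline{q_2}\geq c_2^{\min}$, and therefore $f_2\geq c_2^{\min}\geq f_{12}$, i.e., $\dot{Q}_2\leq 0$.

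The lower face $\{q_2=\underline{q_2}\}$ is where I expect the main obstacle. I would bound $r_2$ and $s_1$ from below \emph{uniformly in $i$}. Since $\underline{q_2}\leq c_1^{\min}/v\leq c^{\max}/v$, assumption (\ref{critic flow}) gives $r_2(\underline{q_2})=w(\theta-\underline{q_2})\geq c^{\max}\geq c_{2i}$ for every mode; and since $\underline{q_2}\leq c_1^{\min}/v$ while $q_1\geq\underline{q_1}\geq\underline{q_2}$, one also has $s_1(i,q)=\min(vq_1,c_{1i})\geq v\underline{q_2}$. A short case split on whether $c_{2i}\leq v\underline{q_2}$ or $c_{2i}>v\underline{q_2}$ then yields $f_{12}=\min(s_1,r_2)\geq\min(v\underline{q_2},c_{2i})=f_2$, i.e., $\dot{Q}_2\geq 0$.

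The reason this last step is the hard part is that it is the only boundary on which both queues' capacities enter the estimate simultaneously, so $r_2$ must dominate \emph{every} $c_{2i}$ (not merely $c_2^{\min}$) and $s_1$ must be bounded below uniformly across modes. This is precisely why the proposition takes $\underline{q_2}=\min\{\underline{q_1},c_1^{\min}/v\}$ rather than just $\underline{q_1}$, and why the critical-flow assumption (\ref{critic flow}) is indispensable at this face.
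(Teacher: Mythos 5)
Your proposal is correct and follows essentially the same route as the paper: verify on each boundary face of $\tilde{\mathcal{Q}}$ that the vector field points inward for every mode, using $v\underline{q_1}\leq a$ on the $q_1$-face, the critical-flow assumption (\ref{critic flow}) to lower-bound $r_2$ on the lower $q_2$-face, and $r_2(\overline{q_2})=c_2^{\min}\leq\min(v\overline{q_2},c_{2i})$ on the upper $q_2$-face. The only cosmetic difference is on the face $\{q_2=\underline{q_2}\}$, where the paper sandwiches both $f_{12}$ and $f_2$ against $v\underline{q_2}$ while you compare $f_{12}\geq\min(v\underline{q_2},c_{2i})=f_2$ directly; both arguments are valid.
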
 
\begin{proof}
	See Appendix~\ref{Proof prop 1}.
\end{proof}
We further split the invariant set into two parts, such that $\tilde{\mathcal{Q}} = \tilde{\mathcal{Q}}_1 \cup \tilde{\mathcal{Q}}_2$, where $\tilde{\mathcal{Q}_1} = [q_c, \infty]\times[\underline{q_2},\bar{q_2}]$, $\tilde{\mathcal{Q}_2} = [\underline{q_1},q_c]\times[\underline{q_2},\bar{q_2}]$, and define
\begin{align}
	{\mathcal{F}}_1(i) = \min_{q\in \tilde{\mathcal{Q}}_1}  (f_{12}(i,q)+f_2(i,q)),\\
	{\mathcal{F}}_2(i) = \min_{q\in \tilde{\mathcal{Q}}_2}  (f_{12}(i,q)+f_2(i,q)).
\end{align}
Now we are ready to establish the necessary and sufficient conditions for the system (\ref{time_diff}) to be stable.
\begin{theorem}\label{theorem_tandem}
If system (\ref{time_diff}) is stable, then $a\leq \sum_{i = 1}^{m} c_{ji}p_i,\; j =1,2$.
	If there exist positive constants $\alpha_1,\alpha_2\dots \alpha_m$ and $\beta$ such that 
	\begin{align}\label{tandem_sufficient}
		\forall i \in \mathcal{I}, \alpha_i\beta(2a-\mathcal{F}_1(i))+\sum_{j\in \mathcal{I}}\lambda_{ij}(\alpha_j-\alpha_i)\leq -1,
	\end{align}
	then  system (\ref{time_diff}) is stable with initial condition $[q_1(0), q_2(0)]$ in $\tilde{\mathcal{Q}}$.
\end{theorem}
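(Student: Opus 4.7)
The plan is to handle the necessary and sufficient parts by separate arguments, both of which lean on tools already in play: mass conservation plus ergodicity of the modulating chain for necessity, and a Foster–Lyapunov drift inequality using an exponential Lyapunov function for sufficiency.

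For necessity, I would start from the integrated dynamics $Q_1(t)-Q_1(0)=\int_{0}^{t}(a-f_{12}(\tau))\,d\tau$. Stability in the sense of (\ref{stable}) forces $\mathbb{E}[Q_1(t)]/t\to 0$ (since $\exp(|q|)$ dominates $q$), so $\frac{1}{t}\int_0^t \mathbb{E}[f_{12}(\tau)]\,d\tau\to a$. Because $f_{12}(\tau)\leq c_{1,I(\tau)}$ pointwise and $I$ is ergodic with stationary distribution $p$, the ergodic theorem gives $\frac{1}{t}\int_0^t c_{1,I(\tau)}\,d\tau\to\sum_i p_i c_{1i}$, hence $a\leq \sum_i p_i c_{1i}$. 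For queue~2 I exploit the fact that, by Proposition~\ref{prop_tandem}, $Q_2$ is confined to the bounded interval $[\underline{q_2},\overline{q_2}]$, so $Q_2(t)/t\to 0$ trivially; combining $\frac{1}{t}\int_0^t f_{12}(\tau)d\tau\to a$ with the integrated equation for $Q_2$ yields $\frac{1}{t}\int_0^t f_2(\tau)d\tau\to a$, and $f_2\leq c_{2,I(\tau)}$ then gives $a\leq\sum_i p_i c_{2i}$.

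For the sufficient part, I propose the Lyapunov candidate
\begin{equation*}
V(i,q)=\alpha_i\exp\bigl(\beta(2q_1+q_2)\bigr),
\end{equation*}
with $\beta\geq 1$. The coefficient pattern $(2,1)$ is chosen so that $2\dot Q_1+\dot Q_2=2a-f_{12}-f_2$, exactly the quantity appearing in (\ref{tandem_sufficient}). Applying the infinitesimal generator (\ref{infinitesimal}) yields
\begin{equation*}
\mathcal{L}V(i,q)=\exp\bigl(\beta(2q_1+q_2)\bigr)\Bigl[\alpha_i\beta\bigl(2a-f_{12}(i,q)-f_2(i,q)\bigr)+\sum_{j\in\mathcal{I}}\lambda_{ij}(\alpha_j-\alpha_i)\Bigr].
\end{equation*}
On $\tilde{\mathcal{Q}}_1$ I bound $f_{12}+f_2$ below by $\mathcal{F}_1(i)$; the hypothesis (\ref{tandem_sufficient}) then gives $\mathcal{L}V(i,q)\leq -\exp(\beta(2q_1+q_2))$. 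On the complementary compact piece $\tilde{\mathcal{Q}}_2$, both $V$ and $\mathcal{L}V$ are uniformly bounded by some constant $K$, so globally on the invariant set $\tilde{\mathcal{Q}}$ from Proposition~\ref{prop_tandem},
\begin{equation*}
\mathcal{L}V(i,q)\leq -\exp\bigl(\beta(2q_1+q_2)\bigr)\mathbf{1}_{\tilde{\mathcal{Q}}_1}(q)+K\,\mathbf{1}_{\tilde{\mathcal{Q}}_2}(q).
\end{equation*}
Integrating via Dynkin's formula and rearranging, $\frac{1}{t}\int_0^t \mathbb{E}[\exp(\beta(2Q_1+Q_2))\mathbf{1}_{\tilde{\mathcal{Q}}_1}]\,d\tau$ stays bounded uniformly in $t$; since $|Q|=Q_1+Q_2\leq 2Q_1+Q_2$ on $\tilde{\mathcal{Q}}_1$ and $|Q|$ is outright bounded on $\tilde{\mathcal{Q}}_2$, the definition (\ref{stable}) is met.

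The main obstacle is the technical justification of the Dynkin/Foster–Lyapunov step for this piecewise-deterministic Markov process with an unbounded Lyapunov function: I need to verify that $\mathbb{E}[V(I(t),Q(t))]$ is finite for every finite $t$, that Proposition~\ref{prop_tandem} confines trajectories to $\tilde{\mathcal{Q}}$ so that the case split $\tilde{\mathcal{Q}}_1\cup\tilde{\mathcal{Q}}_2$ is exhaustive, and that the exchange of limit, expectation, and time integral is legitimate. Once these measure-theoretic details are in place (standard for PDMPs with locally Lipschitz flow and finite mode space, cf.~\cite{davis1984piecewise,meyn1993stability}), the conclusion follows directly from the drift inequality.
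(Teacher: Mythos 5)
Your proposal is correct and follows essentially the same route as the paper: necessity via integrating the mass-conservation equations and applying the ergodic theorem for the modulating chain, and sufficiency via the Meyn--Tweedie drift condition with the exponential Lyapunov function $V(i,q)=\alpha_i\exp(\beta(2q_1+q_2))$, splitting the invariant set into the unbounded piece $\tilde{\mathcal{Q}}_1$ (where hypothesis (\ref{tandem_sufficient}) gives strictly negative drift) and the compact piece $\tilde{\mathcal{Q}}_2$ (absorbed into the additive constant). Your added requirement $\beta\geq 1$ is a sensible patch for relating $\exp(\beta(2q_1+q_2))$ back to $\exp(|q|)$ in (\ref{stable}) --- a step the paper glosses over --- but otherwise the two arguments coincide.
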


\begin{proof}
See Appendix~\ref{proof theorem 2}.
\end{proof}
This follows the intuition that if the upstream traffic queue does not grow to infinity, one must have the average capacity greater than the input traffic rate at each end of the queue. The sufficient condition (\ref{tandem_sufficient}) can be understood in two parts. The first term in the inequality is the effect from the dynamics of the system, and the second term is the effect from the mode transition. 
\subsection{Merge Link  with State Dependent Processing Rate}
It often shows in the traffic system that two links merge into one, or one link splits into two. Here we pay our attention to the situation where traffic comes from two sources and merge in to one link in the traffic. We will derive necessary and sufficient conditions for the merge queue. It is an extension of tandem fluid queuing model. Two upstream links are connected to one downstream link. Upstream links are modeled as infinite buffer size queue, and the downstream link is modeled as finite buffer size queue. The continuous stochastic state is $Q(t)=[Q_1(t),\,Q_2(t),\,Q_3(t)]$, where $Q_1,\,Q_2$ are lengths of the upstream queues, $Q_3$ is the length of downstream queue.  Traffic arrives two upstream queues at constant rates $[a_1,\,a_2]$ [veh/hr]. Other assumptions for tandem fluid queue in previous sections hold here. The diagram of merge link is shown in Figure~\ref{merge_queue_fig}.
 \begin{figure}[H]
 	\centering
 	\includegraphics[scale = 0.3]{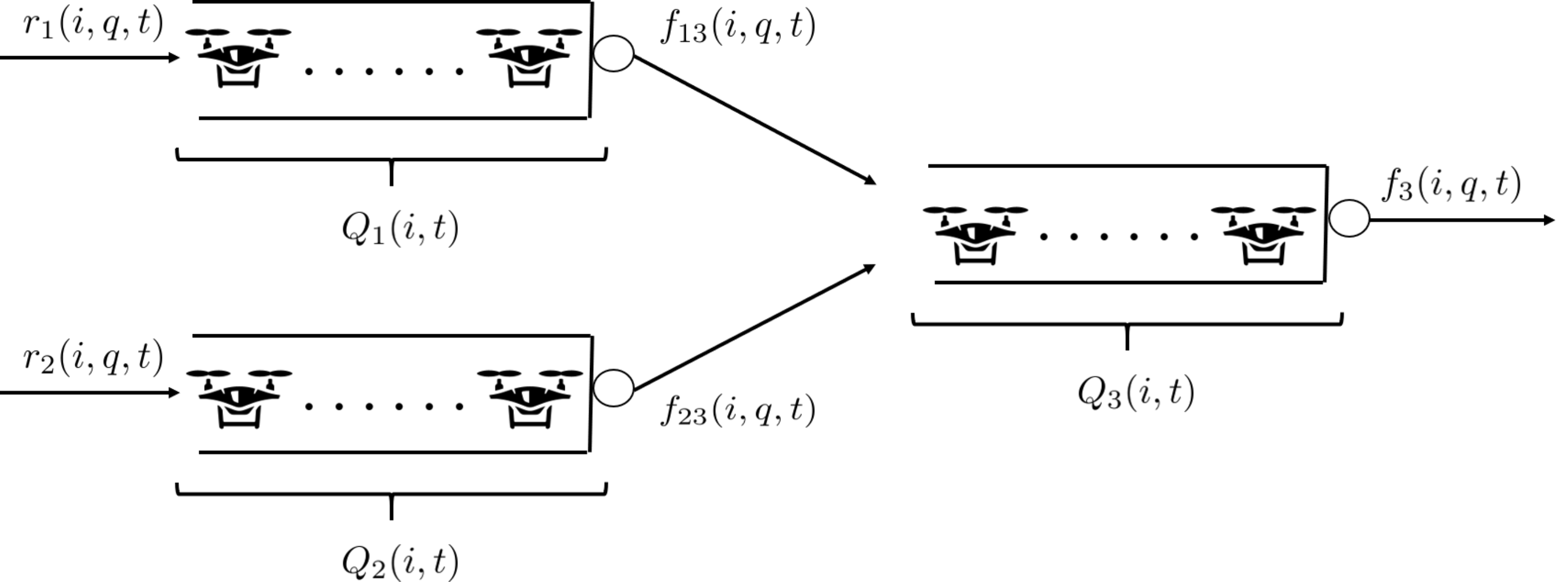}
 	\caption{Merge Fluid Queuing model}
 	\label{merge_queue_fig}
 \end{figure} 
The control action of the merging link is the inflow allocation of the downstream queue; the fraction of downstream intake allocated to a upstream discharge is proportional to the amount of traffic in the queue.
The flow diagram in short-hand thus is:
\begin{equation}
\begin{aligned}
	r_1 &= a_1,\\
	r_2 &= a_2,\\
	f_{13}(i,q) &= \min\{vq_1, \frac{q_1}{q_1+q_2}w(\theta-q_3), c_{1i}\},\\
	f_{23}(i,q) &= \min\{vq_2, \frac{q_2}{q_1+q_2}w(\theta-q_3), c_{2i}\},\\
	f_3(i,q) &= \min\{vq_3,c_{3i}\}.
\end{aligned}
\end{equation}
The evolution of states, by mass conservation, is:
\begin{align}\label{merge queue}
	\dot{Q}(i,q) = \begin{bmatrix}
	a_1-f_{13}(i,q)\\
	a_2-f_{23}(i,q)\\
	f_{13}(i,q)+f_{23}(i,q)-f_3(i,q)
	\end{bmatrix}.
\end{align}
Similarly, we can derive the stability conditions for system (\ref{merge queue}) in the sense of (\ref{stable}). We follow the same strategy for proving stability condition for tandem fluid queue. We first construct the invariant set $\tilde{\mathcal{Q}} = [\underline{q_1},\infty]\times[\underline{q_1},\infty]\times[\underline{q_3},\overline{q_3}]$, and use the monotonic property of Lyapunov function with respect to continuous argument to prove the sufficient condition for initial condition in the invariant set. The proofs for necessary and sufficient conditions are very similar. The construction for the invariant set is slightly different. We present the construction and proof for invariant set in following:
\begin{proposition}\label{Prop_merge_queue}
	For fluid queue system (\ref{merge queue}) with constant inflow vector $a_1, a_2\in R$ the set $\tilde{\mathcal{Q}} = [\underline{q_1},\infty]\times[\underline{q_2},\infty]\times[\underline{q_3},\overline{q_3}]$ is invariant with boundaries defined as follows:
	\begin{equation}\label{invar_3q}
	\begin{aligned}
		\underline{q_1} =& \min\{a_1/v, c^{\max}/v\},\\
		\underline{q_2} =&\min\{a_2/v, c^{\max}/v\},\\
		\underline{q_3} =& \min\{\underline{q_1}+\underline{q_2}, c_1^{\min}/v+ \underline{q_2},c_2^{\min}/v+ \underline{q_1},\\&c_1^{\min}/v+ c_2^{\min}/v\},\\
		\overline{q_3} =& \theta-c_3^{\min}/w.
	\end{aligned}
	\end{equation}
\end{proposition}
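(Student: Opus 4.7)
The plan is to prove that $\tilde{\mathcal{Q}}$ is forward invariant by verifying, face by face, that the component of the drift $\dot Q$ in (\ref{merge queue}) transverse to each boundary points into (or along) $\tilde{\mathcal{Q}}$. Because (\ref{merge queue}) is piecewise-Lipschitz on $\tilde{\mathcal{Q}}$ and mode jumps leave the continuous state unchanged, this pointwise inward-pointing condition is sufficient. There are four boundary faces to check: $q_1=\underline{q_1}$, $q_2=\underline{q_2}$, $q_3=\overline{q_3}$, and $q_3=\underline{q_3}$.

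The two lower side faces are symmetric and direct. On $\{q_1=\underline{q_1}\}$ we have $f_{13}(i,q)\le vq_1=v\underline{q_1}\le a_1$ by the choice $\underline{q_1}=\min\{a_1/v,c^{\max}/v\}$, hence $\dot Q_1=a_1-f_{13}\ge 0$; the $q_2$ face is analogous. For the upper face $q_3=\overline{q_3}=\theta-c_3^{\min}/w$, the downstream supply gives $f_{13}+f_{23}\le w(\theta-\overline{q_3})=c_3^{\min}$, while the critical-density inequality (\ref{critic flow}) together with the choice of $\overline{q_3}$ yields $v\overline{q_3}\ge c^{\max}\ge c_{3i}$, so $f_3(i,q)=c_{3i}\ge c_3^{\min}\ge f_{13}+f_{23}$ and $\dot Q_3\le 0$ on that face.

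The main obstacle is the lower face $q_3=\underline{q_3}$, where one must prove $f_{13}(i,q)+f_{23}(i,q)\ge f_3(i,q)$. Since $f_3\le vq_3=v\underline{q_3}$, it suffices to show $f_{13}+f_{23}\ge v\underline{q_3}$, and I would handle this by a case analysis on which of the three arguments of the min is active in each of $f_{13}$ and $f_{23}$. The four ``pure'' demand/capacity combinations, namely (demand,demand), (demand,capacity), (capacity,demand), (capacity,capacity), produce the lower bounds $v(\underline{q_1}+\underline{q_2})$, $v\underline{q_1}+c_2^{\min}$, $c_1^{\min}+v\underline{q_2}$, and $c_1^{\min}+c_2^{\min}$ respectively, each of which is exactly $v$ times one of the four terms in the definition of $\underline{q_3}$ in (\ref{invar_3q}); so in each pure case the desired inequality is immediate. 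The remaining subcases, where a supply share $\tfrac{q_j}{q_1+q_2}w(\theta-q_3)$ is the binding term for one of the sending flows, require extra bookkeeping: here I would use (\ref{critic flow}) together with the form of $\underline{q_3}$ to show that at $q_3=\underline{q_3}$ the supply is never strictly smaller than the corresponding demand or capacity argument, so that every such subcase collapses to one of the four already handled. This supply-to-pure-case reduction, which must invoke the fundamental-diagram compatibility (\ref{critic flow}) at the correct step, is the part I expect to demand the most care.
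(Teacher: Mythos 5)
Your treatment of the faces $q_1=\underline{q_1}$, $q_2=\underline{q_2}$ and $q_3=\overline{q_3}$ is correct and matches the paper's argument. The gap is on the face $q_3=\underline{q_3}$, and it sits exactly where you anticipated trouble: the claimed reduction, that ``at $q_3=\underline{q_3}$ the supply is never strictly smaller than the corresponding demand or capacity argument,'' is false. The supply share allotted to link $2$ is $\tfrac{q_2}{q_1+q_2}w(\theta-\underline{q_3})$, and since $q_1$ ranges over $[\underline{q_1},\infty)$ this share tends to $0$ as $q_1\to\infty$ with $q_2$ held at $\underline{q_2}>0$; it is then strictly below both $vq_2$ and $c_{2i}$, so the mixed cases do not collapse to the pure ones. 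Writing $f_{13}+f_{23}$ as the minimum of the pairwise sums of arguments, the problematic terms are precisely those containing a supply share, e.g. $vq_1+\tfrac{q_2}{q_1+q_2}w(\theta-q_3)$ and $w(\theta-q_3)$ itself (the sum of the two shares), and these must be bounded below by $v\underline{q_3}$ directly rather than by comparison with the demand/capacity arguments.

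The paper's proof does exactly this: from $\underline{q_3}\le\underline{q_1}+\underline{q_2}\le q_1+q_2$ one gets $v\underline{q_3}\le vq_1+\tfrac{q_2}{q_1+q_2}v\underline{q_3}$, i.e. $v\underline{q_3}$ is split proportionally between the two upstream links, and then the inequality $v\underline{q_3}\le w(\theta-\underline{q_3})$ upgrades the second summand to the supply share, giving $v\underline{q_3}\le vq_1+\tfrac{q_2}{q_1+q_2}w(\theta-\underline{q_3})$ (symmetrically for the other mixed term, and a fortiori for $w(\theta-\underline{q_3})$ alone). That proportional-splitting step is the idea missing from your outline. Be aware, though, that the inequality $v\underline{q_3}\le w(\theta-\underline{q_3})$ is the real crux: it is equivalent to $\underline{q_3}\le \tfrac{w}{v+w}\theta$, and since $\underline{q_3}$ can be as large as $(c_1^{\min}+c_2^{\min})/v$, i.e. up to $2c^{\max}/v$, condition (\ref{critic flow}) alone does not deliver it; completing the argument requires either assuming this bound or strengthening (\ref{critic flow}) for the merge geometry (e.g. $2c^{\max}\le\tfrac{vw}{v+w}\theta$). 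Any route you take through the lower $q_3$ face will have to confront this point explicitly.
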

\begin{proof}
	See Appendix~\ref{proof prop 2}.
\end{proof}
Similarly, we let $\tilde{Q} = \tilde{Q}_1\cup\tilde{Q}_2\cup\tilde{Q}_3\cup\tilde{Q}_4$, where $\tilde{Q}_1 = [\underline{q_1},q_c]\times[\underline{q_2},q_c]\times[\underline{q_3},\overline{q_3}]$,  $\tilde{Q}_2 = [\underline{q_1},q_c]\times[q_c,\infty]\times[\underline{q_3},\overline{q_3}]$,  $\tilde{Q}_3 = [q_c,\infty]\times[\underline{q_2},q_c]\times[\underline{q_3},\overline{q_3}]$,  $\tilde{Q}_4 = [q_c,\infty]\times[q_c,\infty]\times[\underline{q_3},\overline{q_3}]$. Denote:
\begin{equation}
\begin{aligned}
		\mathcal{F}_2(i) &= \min_{q\in\tilde{Q}_2}f_{13}(i,q)+f_{23}(i,q)+f_3(i,q),\\
		\mathcal{F}_3(i) &= \min_{q\in\tilde{Q}_3}(f_{13}(i,q)+f_{23}(i,q)+f_3(i,q)),\\
		\mathcal{F}_4(i) &= \min_{q\in\tilde{Q}_4}(f_{13}(i,q)+f_{23}(i,q)+f_3(i,q)),\\
		\mathcal{F}_m(i) &= \min\{\mathcal{F}_2(i),\mathcal{F}_3(i),\mathcal{F}_4(i)\}.
\end{aligned}
\end{equation}
Solving above minimization problem can be easy. By definition:
\begin{equation}
	\begin{aligned}
	&f_{13}(i,q)+f_{23}(i,q)+f_3(i,q) =\\
	  &\min\{vq_1+vq_2,vq_1+c_{2i},vq_2+c_{1i},vq_3,c_{1i}+c_{2i},\\
	&vq_1+\frac{q_2}{q_1+q_2}w(\theta-q_3),vq_2+\frac{q_1}{q_1+q_2}w(\theta-q_3)\}\\
	&+ \min\{vq_3,c_{3i}\}
	\end{aligned}
\end{equation}
It is not hard to see that $f_{13}(i,q)+f_{23}(i,q)+f_3(i,q)$ is monotonously increasing function with respect to $q_1,\,q_2$, and concave function with respect to $q_3$. Therefore, minimum can be found by checking the boundaries of the sets.

\begin{theorem}\label{theorem_merge}
	The necessary condition for (\ref{stable}) to hold for system (\ref{merge queue})  is $a_j\leq\sum_{i=1}^{m}c_{ji}p_i$, for $j = 1, 2$, and $a_1+a_2\leq \sum_{i=1}^{m}c_{3i}p_i$. If there exist positive constants $\alpha_1,\alpha_2\dots \alpha_m$ and $\beta$ such that:
	\begin{equation}\label{merge_bili}
	\begin{aligned}
	\forall i \in \mathcal{I}, \alpha_i \beta (2a_1+2a_2-\tilde{F}_m(i))+\sum_{j\in \mathcal{I}}\lambda_{ij}(\alpha_j-\alpha_i)\\
	\leq -1
	\end{aligned}
	\end{equation}
	system (\ref{merge queue}) is stable in sense of  (\ref{stable})
\end{theorem}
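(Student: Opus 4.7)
The plan is to imitate the two-step strategy used for Theorem~\ref{theorem_tandem}, but scaled up to three queues and adapted to the invariant set supplied by Proposition~\ref{Prop_merge_queue}. For the \emph{necessary} direction, I would integrate each component of (\ref{merge queue}) over $[0,t]$, divide by $t$, and take expectations. The hypothesis (\ref{stable}) implies via Markov's inequality that $\mathbb{E}[Q_j(t)]/t\to 0$ for $j=1,2,3$, so the long-run inflow equals the long-run discharge on each link. Combining this with ergodicity of the weather chain (unique stationary distribution $p$ from (\ref{transass})) and the pointwise bound $f_j\le c_{ji}$ yields $a_j\le\sum_i p_i c_{ji}$ for $j=1,2$. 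Summing the conservation relations for the upstream queues shows that the time-averaged inflow into the downstream queue is $a_1+a_2$, which then must be dominated by $\sum_i p_i c_{3i}$.

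For the \emph{sufficient} direction I would use the Lyapunov function
\begin{align}
V(i,q) \;=\; \alpha_i\,\exp\!\bigl(\beta(2q_1+2q_2+q_3)\bigr),
\end{align}
with the weights $(2,2,1)$ chosen precisely so that, by (\ref{merge queue}),
\begin{align}
\frac{\partial V}{\partial q}F(i,q) \;=\; \alpha_i\beta\exp\!\bigl(\beta(2q_1+2q_2+q_3)\bigr)\bigl(2a_1+2a_2 - f_{13}(i,q)-f_{23}(i,q)-f_3(i,q)\bigr).
\end{align}
Substituting into the infinitesimal generator (\ref{infinitesimal}) and factoring out the exponential, the bracketed expression of (\ref{merge_bili}) appears directly, multiplied by $\exp(\beta(2q_1+2q_2+q_3))$. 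On the invariant set $\tilde{\mathcal{Q}}$ I would partition by $\tilde{Q}_1,\ldots,\tilde{Q}_4$ as defined in the text. The region $\tilde{Q}_1$ is compact and contributes at most a finite constant to $\mathbb{E}[V]$; on $\tilde{Q}_2\cup\tilde{Q}_3\cup\tilde{Q}_4$, the lower bound $f_{13}+f_{23}+f_3\ge\tilde{F}_m(i)$ holds by construction, and hypothesis (\ref{merge_bili}) then gives $\mathcal{L}V(i,q)\le-\exp(\beta(2q_1+2q_2+q_3))\le-\exp(\beta|q|)$ off a compact set. A standard Dynkin/Foster--Lyapunov argument, identical in structure to the one invoked at the end of the Theorem~\ref{theorem_tandem} proof, converts this drift bound into the time-averaged exponential moment bound (\ref{stable}), using $2q_1+2q_2+q_3\ge|q|$ on $\mathbb{R}^3_{\ge 0}$ so that $V$ dominates the target $\exp(|q|)$.

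The main obstacle I anticipate is establishing that $\tilde{F}_m(i)$ is in fact a valid uniform lower bound for $f_{13}+f_{23}+f_3$ across each piece $\tilde{Q}_k$ for $k=2,3,4$. The authors have already flagged the key structural facts, namely that $f_{13}+f_{23}+f_3$ is monotone nondecreasing in $(q_1,q_2)$ and concave in $q_3$, so that the minimum over each $\tilde{Q}_k$ is attained at a vertex of the box (e.g.\ $(q_c,q_c,\overline{q_3})$-type corners arising from (\ref{invar_3q})). Making this rigorous amounts to case analysis on which of the seven candidates inside $\min\{\cdot\}$ is active, and checking that none of these candidates produces a smaller value in the interior. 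A secondary subtlety is extending stability from initial conditions in $\tilde{\mathcal{Q}}$ to arbitrary ones in $\mathcal{Q}$; I would argue that the vector field on $\mathcal{Q}\setminus\tilde{\mathcal{Q}}$ points inward by construction of the boundaries in (\ref{invar_3q}), so every trajectory enters $\tilde{\mathcal{Q}}$ in finite deterministic time independent of the mode switching, and then apply the drift argument from that entry time onward.
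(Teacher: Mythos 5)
Your proposal is essentially the proof the paper intends: the authors omit it entirely, stating only that the necessity argument mirrors the tandem case and that sufficiency follows from the Lyapunov function $V(i,q)=\alpha_i\exp(\beta h^Tq)$ with $h=[2\;2\;1]^T$, which is exactly your construction, including the weight choice that makes $h^TF(i,q)=2a_1+2a_2-f_{13}-f_{23}-f_3$, the partition into $\tilde{Q}_1$ (compact, absorbed into the constant $d$) versus $\tilde{Q}_2\cup\tilde{Q}_3\cup\tilde{Q}_4$ (where $\mathcal{F}_m(i)$ bounds the total flow from below), and the Foster--Lyapunov conclusion. The only caveats, both inherited from or going beyond the paper's own level of rigor, are that $\exp(\beta h^Tq)$ dominates $\exp(|q|)$ only for $\beta$ not too small, and that your claim of entering $\tilde{\mathcal{Q}}$ in finite deterministic time is not quite right (when $a_1<c^{\max}$ the trajectory approaches $\underline{q_1}=a_1/v$ only asymptotically), though the paper sidesteps this by restricting initial conditions to the invariant set.
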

The proof for necessary condition is very similar to the one for tandem queue. Define Lyapunov function as:
\begin{align}
V(i,q) = \alpha_i\exp(\beta h^Tq),
\end{align}
where $\alpha_i,\beta\in \mathbb{R}_+$, $h = [2\;2\; 1]^T$. We can prove sufficient condition for merge link with same argument in proving sufficient condition for stability in tandem queue. Therefore omitted.

The sufficient condition not only provides the method for checking if a pair of inflow rates will be stabilizing, it also offers insight for traffic flow allocation. In our model, the sufficient condition guaranteed stability for the merge queue system if the downstream intake allocation follows a proportional priority rule. Some other allocation rule may be analyzed under the same framework.

\section{Numerical Examples}\label{numerical}
In this section, we validate our theoretical findings through numerical simulations, and we show that the optimal throughput under uncertainty can be solved via solving sequence of bilinear inequality.

\subsection{Stability Conditions}
The stability conditions is verified in the merge link model. The modes are set to be one normal mode and one reduced capacity mode (congested mode),  thus, $\mathcal{I} =\{1,\,2\}$. The parameters for simulations are listed in the table below:
The parameters for numerical illustrations for tandem fluid queue is shown in Table \ref{table_merge}.
\begin{table}[H]
	\centering
	\caption{Parameters for Merge Fluid Queue.}
	\label{table_merge}
	\begin{tabular}{|c|c|c|}
		\hline
		\multicolumn{3}{|c|}{Paremeters}              \\ \hline
		Free Flow Speed         & $v$ [miles/hr]     & 8 \\ \hline
		Congested Flow Speed    & $w$ [miles/hr]     & 2 \\ \hline
		Maximum Traffic Density & $\theta$ [veh/mile] & 400 \\ \hline
		Capacity of Link 1 & $c_1$[veh/hr] & $[800,\;200]$ \\ \hline
		Capacity of Link 2 & $c_2$ [veh/hr]& $[800,\;200]$ \\ \hline
		Capacity of Link 3 & $c_3$ [veh/hr]& $[800,\;400]$ \\ \hline
	\end{tabular}
\end{table}
Let the governing transition matrix for mode switching to be:
\begin{align}
\Lambda = \begin{bmatrix}
-1& 1\\
1& -1
\end{bmatrix}.
\end{align} 
The steady state distribution for given transition matrix is $p = [0.5\;0.5]$.
To verify the necessary condition in Theorem \ref{theorem_merge}, we set the inflow to be $a_1 = 300$[veh/hr], $a_2 = 500$[veh/hr]. Therefore, $a_1+a_2> p^Tc_3 = 600$; the necessary condition is violated.  It can be seen in Figure \ref{unstable_queue} that the downstream queue $q_3$ reaches its buffer size and the upstream queues, $q_1$ and $ q_2$ grows unbounded.
\begin{figure}[H]
	\centering
	\includegraphics[scale = 0.5]{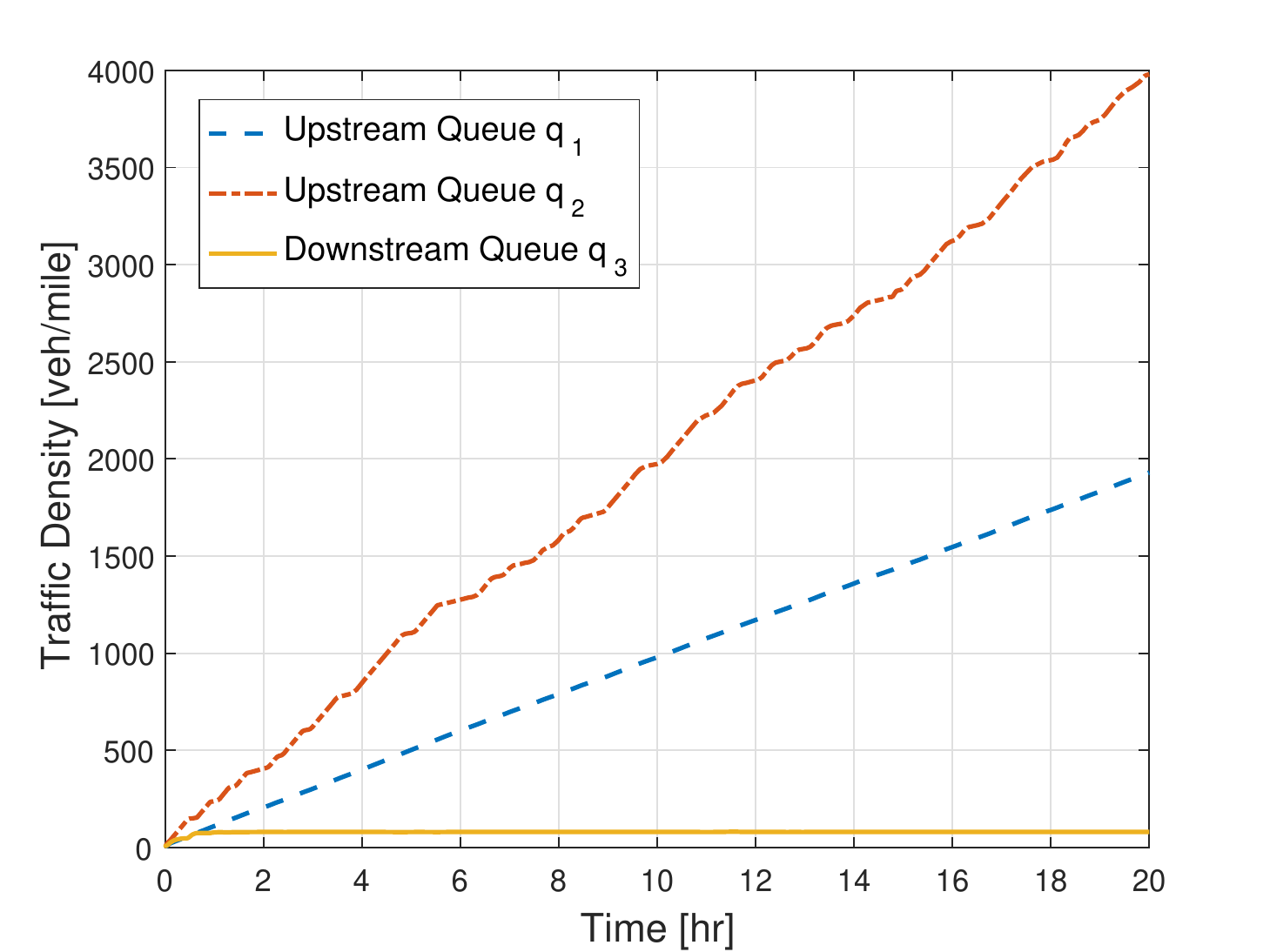}
	\caption{Unstable Queue.}
	\label{unstable_queue}
\end{figure}
To see the sufficiency in Theorem \ref{theorem_merge}, we let $a_1 = 200$[veh/hr], $a_2 = 250$[veh/hr], then the bilinear inequality is solved with YALMIP, a Matlab based optimization package~\cite{yalmip}. The solution is $\alpha_1 = 9.1956$, $\alpha_2 =12.6839$, $b = 0.1891$. The response is shown in Figure \ref{stable_queue}.
\begin{figure}[H]
	\centering
	\includegraphics[scale = 0.5]{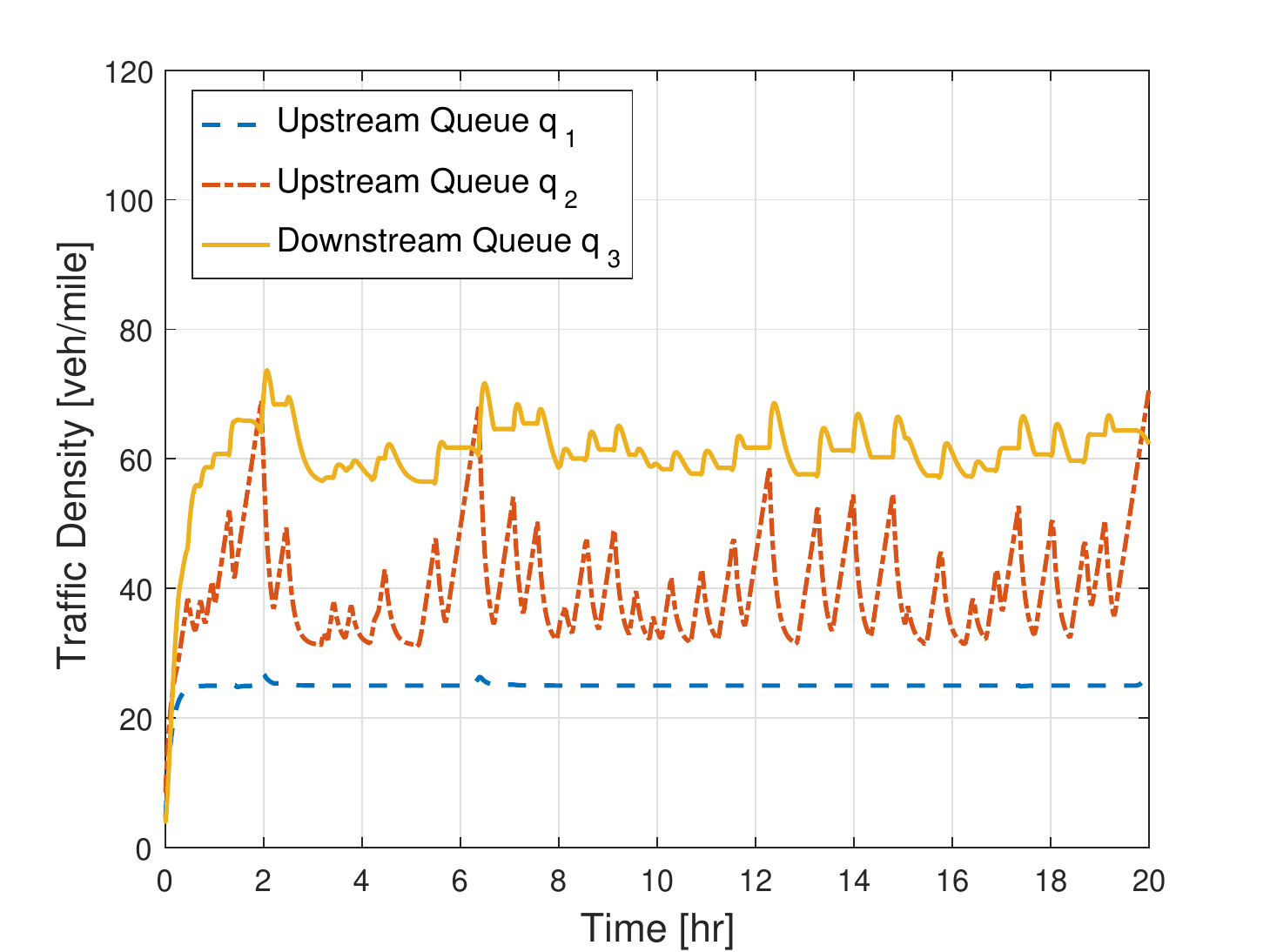}
	\caption{Stable Queue.}
	\label{stable_queue}
\end{figure}

\subsection{Optimality under Uncertainty}
In this subsection, we show how to improve the traffic throughput under weather uncertainty, and how the parameters of uncertainty will affect the throughput in the tandem fluid queuing model. Based on the form of bilinear inequality (\ref{merge_bili}), the analysis for throughput of merge link is very similar to the one in tandem link. Namely, we try to find the maximum constant inflow rate to yield stabilizing queue via sufficient condition with different parameters. This will be important for decision makers to assign traffic to each link. We claim that with the sufficient condition we derived, stability for constant inflow can be verified. We will be studying the impact of transition intensity and fluctuation on the maximum stabilizing inflow for tandem fluid queue.
 This problem can be reformulate as:
 \begin{maxi}|l|
 	{a}{a,}{}{}
 	\addConstraint{(\ref{tandem_sufficient}). }
 \end{maxi}
The maximization problem can be solved using bisection search. In each iteration of the bisection search, we solved the bilinear inequality (\ref{tandem_sufficient}).
The parameters $v,\,w,\,\theta$ for numerical illustrations for tandem fluid queue is same as the ones in Table \ref{table_merge}.
We consider two operational modes, one normal, and one congested, therefore, $\mathcal{I} =\{1,\,2\}$. The capacity of downstream link is reduced in the congested mode. Let upstream capacity be $c_1 = [800,\;800]$, downstream capacity be $c_2 = [800,\; 400]$. Elements are corresponding with normal and congested mode respectively. Let the governing transition matrix for discrete state Markov jump be:
\begin{align*}
	\Lambda = \begin{bmatrix}
	-\mu& \mu\\
	1 & -1
	\end{bmatrix}.
\end{align*}
The larger $\mu$ is, the more frequent the mode of tandem fluid queue will switch to congested mode. The steady state distribution is $p = \frac{1}{1+\mu}[1\; \mu]^T$, with which we can compute the inflow rate for necessary condition (\ref{tadem_nece}). Denote $a_{n}$ and $a_{s}$ the maximum inflow rate for necessary condition to hold and the maximum inflow rate for sufficient condition to hold respectively.

\begin{figure}[H]
	\centering
	\includegraphics[scale = 0.5]{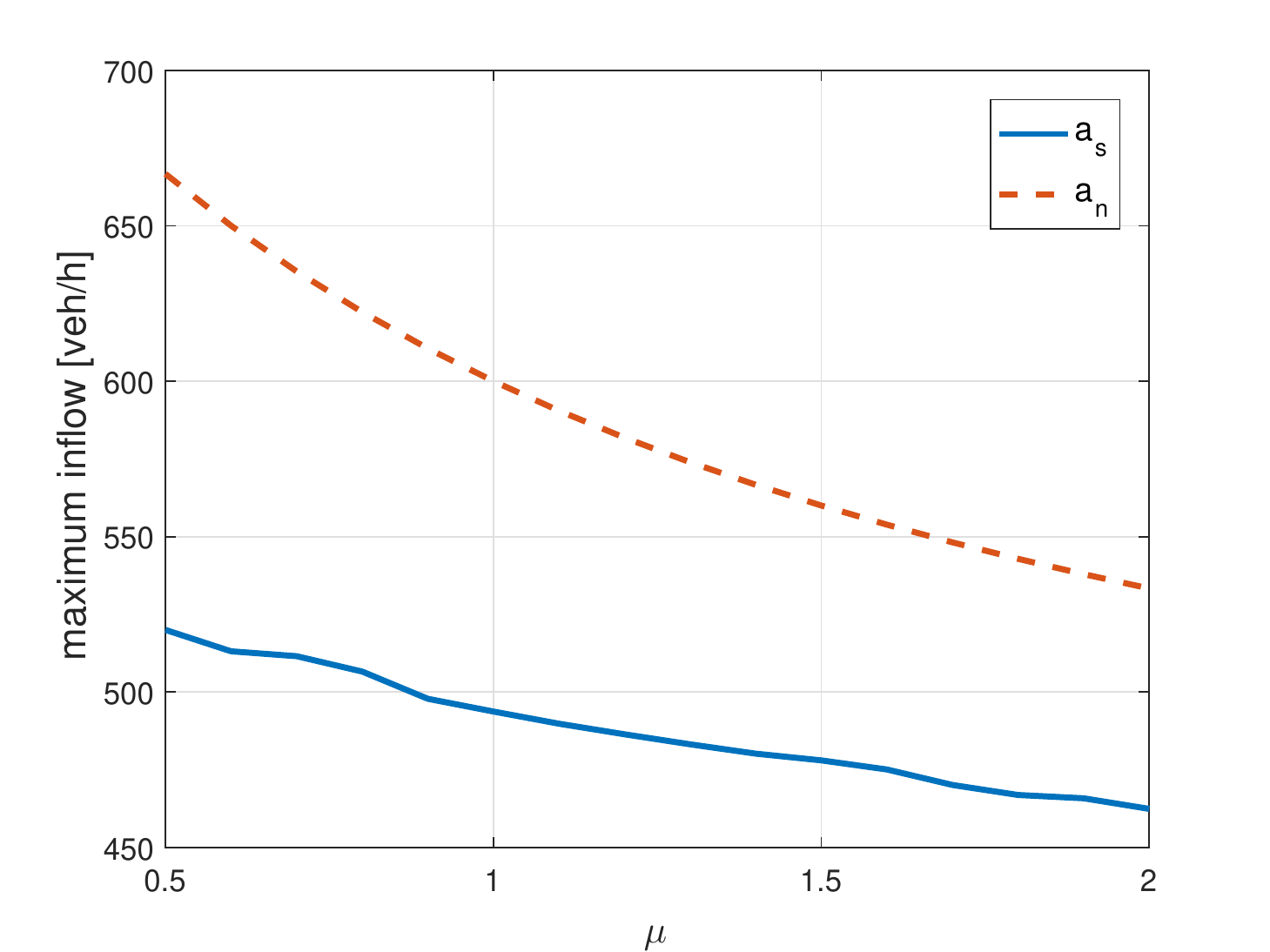}
	\caption{The change of inflow for necessary condition to hold $a_n$ and inflow for sufficient condition to hold $a_s$ with respect to transition intensity parameter $\mu$.}
	\label{max_inflow_trasition_fig}
\end{figure}
Figure \ref{max_inflow_trasition_fig} shows $a_n$ and $a_s$ with varying transitional intensity. It follows the intuition that when congested mode is visited more frequently, both inflow rate for necessary condition to hold and inflow rate for sufficient condition to hold will decrease.

Different magnitude of capacity fluctuation under same transitional intensity may yield the same necessary bound for inflow rate, however, this is not true for inflow rate for sufficient condition to hold.  We demonstrate this property by numerical example. 
Let state transition matrix be:
\begin{align}
\Lambda = \begin{bmatrix}
-1& 1\\
1& -1
\end{bmatrix}.
\end{align} 
Let the capacity fluctuation be  $c_1 = [800,\;800]$, $c_2 = [600-\delta_c,\; 600+\delta_c]$. The steady state distribution of mode is $p = [0.5 \; 0.5]$. Thus, the $a_n = 600$ for any $\delta_c$. Figure \ref{max_inflow_fluc} shows that the stabilizing inflow rate decrease with the magnitude of the fluctuation.
\begin{figure}[H]
	\centering
	\includegraphics[scale = 0.5]{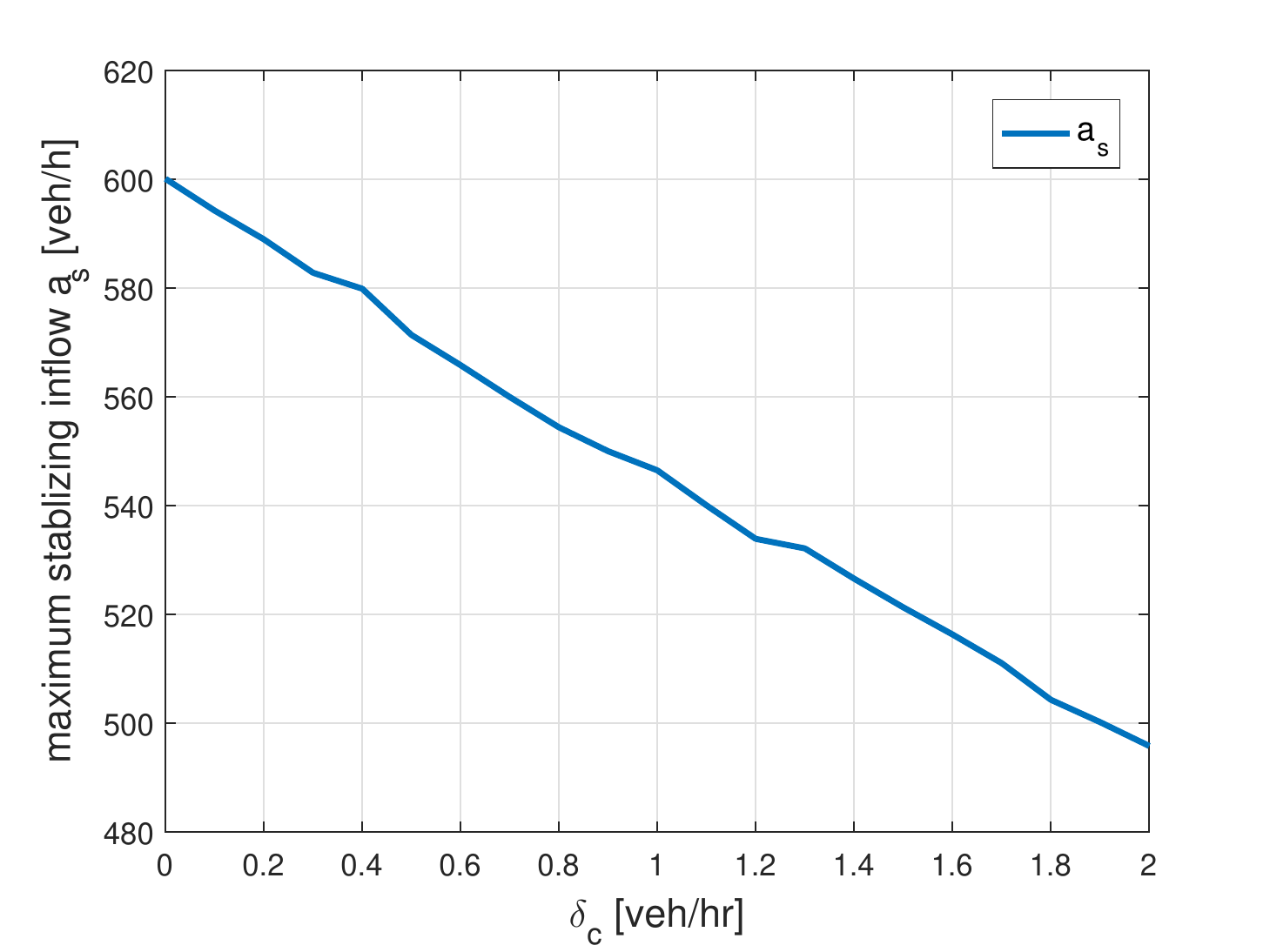}
	\caption{Stabilizing Inflow Rate $a_s$ with respect to Capacity Fluctuation Magnitude $\delta_c$.}
	\label{max_inflow_fluc}
\end{figure}

\section{Conclusion}\label{conclusion}
In this paper, we consider the congestion in future UAV traffic system under weather uncertainty from a system-theoretical perspective. We developed models for UAV traffics dynamics in three basic traffic link components. Based on our models, we derived the necessary and sufficient conditions on the inflow rates for the long run stability. We show that the necessary condition is intuitive, and the sufficient condition can be numerically verified. With our results, future UAV traffic designer can have insight and method on designing the flows in UAV traffic network and improve performance with stability guarantees.

\appendix

\subsection{Proof of Theorem \ref{theorem_single}}\label{proof th 1}
First we prove necessity. Integrating (\ref{dynamics}) gives:
\begin{align}\label{Qt}
Q(t) = \int_{0}^{t} \big(a-f(i,q)\big)\d\tau+Q(0).
\end{align}
By~\cite{stability}, it is necessary for (\ref{stable}) that the first moment follows:
\begin{align}\label{1st}
\lim_{t\to\infty}\frac{1}{t}|\mathbb{E}Q(t)-\mathbb{E}Q(0)|=0.
\end{align}
Combining (\ref{1st}) and (\ref{Qt}) gives:
\begin{equation}
\begin{aligned}
\lim_{t\to\infty}\frac{1}{t}\int_{0}^{t}\big(a-f(i,q)\big)\d\tau=0,\\
\lim_{t\to\infty}\frac{1}{t}\int_{0}^{t}a\d\tau-\lim_{t\to\infty}\frac{1}{t}\int_{0}^{t}f(i,q)\d\tau = 0.
\end{aligned}
\end{equation}
Define total duration time at mode $i$ to time $t$ as $T_i(t)=\int_{0}^{t}\mathbbm{1}_{(I(\tau)=i)}\d\tau$, where $\mathbbm{1}_{(\cdot)}$ is indicator function.
Under assumption (\ref{transass}), and by~\cite{2013stochastic}:
\begin{align}\label{tfrac}
\lim_{t\to\infty}\frac{T_i(t)}{t}=p_i.
\end{align}
By definition (\ref{flowrate}), we have:
\begin{align}\label{flowinq}
a=\lim_{t\to\infty}\sum_{i\in \mathcal{I}}\frac{T_if(i,q)}{t}\leq	\lim_{t\to\infty}\sum_{i\in \mathcal{I}}\frac{T_ic_i}{t}.
\end{align}
Combining (\ref{flowinq}) and(\ref{tfrac}) gives (\ref{necessary}). 
The proof for sufficiency is given by~\cite{frontiers}.

\subsection{Proof for Proposition \ref{prop_tandem}}\label{Proof prop 1}
	For $q_1 = \underline{q_1}$, we have:
\begin{equation}
\begin{aligned}
\dot{Q}_1(i,\underline{q_1}) &= a-\min(v\underline{q_1},c_{1i},w(\theta-q_2))\\
&\geq a-v\underline{q_1} = 0.
\end{aligned}
\end{equation}

for $q\in\tilde{\mathcal{Q}}$, $q_2 = \underline{q_2}$, we have:
\begin{equation}
\begin{aligned}
\dot{Q}_2(i,\underline{q}) &= \min(vq_1,c_{1i},w(\theta-q_2)) - \min(vq_2,c_{2i})\\
&\geq \min(v\underline{q_1},c_{1i},w(\theta-q_2)) - \min(vq_2,c_{2i}).
\end{aligned}
\end{equation}
Since $q_2\leq c_1^{\min}/v\leq c^{\max}/v$, with (\ref{critic flow}), we get:
\begin{align}
v\underline{q_2}\leq w(\theta-\underline{q_2}).
\end{align}
Therefore,
\begin{align}
\dot{Q}_2(i,\underline{q}) \geq v\underline{q_2}-v\underline{q_2}= 0.
\end{align}
Then we consider the upper boundary, $q\in\tilde{\mathcal{Q}}$, $q_2 = \overline{q_2}$. From (\ref{critic flow}):
\begin{equation}
\begin{aligned}
\frac{vw}{v+w}\theta\geq c^{\max} \geq c_2^{\min}\\
v(\theta - c_2^{\min}/w)\geq c_2^{\min}\\
\min(v\overline{q_2},c_{2i}) \geq c_2^{\min}.
\end{aligned}
\end{equation}
Along with (\ref{queue_dynamics}) gives:
\begin{equation}
\begin{aligned}
&\dot{Q}_2(i,[q_1,\overline{q_2}]^T) = \min(vq_1,c_{1i},w(\theta-\overline{q_2}))-\\
&\min(c_{2i},v\overline{q_2})\leq w(\theta-\overline{q_2}) - c_2^{\min} = 0.
\end{aligned}
\end{equation}

\subsection{Proof for Theorem \ref{theorem_tandem}}\label{proof theorem 2}

We first prove necessity for stability. Integrating (\ref{time_diff}), we get:
\begin{equation}\label{tadem_nece}
\begin{aligned}
Q_1(t) &= \int_{\tau = 0}^{t} \big(a-f_{12}(\tau)\big)\d\tau +q_1(0),\\
Q_2(t) &= \int_{\tau = 0}^{t}\big( f_{12}(\tau)-f_2(\tau)\big)\d\tau +q_2(0).
\end{aligned}
\end{equation}
Therefore,
\begin{equation}
\begin{aligned}
\lim_{t\to\infty}\frac{1}{t}\Big(|Q_1(t)|-\int_{\tau = 0}^{t} a-f_{12}(\tau)\d\tau\Big) &= \\
\lim_{t\to\infty}\frac{1}{t} Q_1(0) &= 0,\\
\lim_{t\to\infty}\frac{1}{t}\Big(|Q_2(t)|-\int_{\tau = 0}^{t} f_{12}-f_{2}(\tau)\d\tau\Big) &= \\
\lim_{t\to\infty}\frac{1}{t} Q_2(0) &= 0.
\end{aligned}
\end{equation}
(\ref{stable}) is essentially the bound for the moment generating function of $|Q(t)|$, we have $P(\lim_{t\to \infty}Q_j(t)) = 0,\; j = 1,2.$ Thus,
\begin{equation}\label{av_q}
\begin{aligned} 
\lim_{t\to\infty}\frac{1}{t}\int_{\tau = 0}^{t} \big(a-f_{12}(\tau)\big)\d\tau = 0 \quad a,s,\\
\lim_{t\to\infty}\frac{1}{t}\int_{\tau = 0}^{t} \big(f_{12}(\tau)-f_2(\tau)\big)\d\tau = 0 \quad a,s.
\end{aligned}
\end{equation}
For each mode $i$, define the total duration time at each mode to $t$:
\begin{align}
T_i(t) = \int_{\tau = 0}^{t} \mathbbm{1}_{I(\tau)=i} \d\tau.
\end{align}
By~\cite{2013stochastic}:
\begin{align}
\lim_{t\to\infty}\frac{T_i(t)}{t}=p_i,
\end{align}
Thus,
\begin{equation}
\begin{aligned}
0 &= \lim_{t\to\infty}\frac{1}{t}\int_{\tau = 0}^{t}\big( a-f_{12}(\tau)\big)\d\tau \\
&= a - \lim_{t\to\infty}\frac{1}{t}\int_{\tau = 0}^{t} \mathbbm{1}_{I(\tau)=i}f_{12}(\tau)
\\ &\geq a-\sum_{i = 1}^{m} c_{1i}p_i.
\end{aligned}
\end{equation}
From (\ref{av_q}), we can also get $0 \geq a - \sum_{i = 1}^{m} c_{2i}p_i$.

The proof for sufficient condition is derived based on a general results from Meyn and Tweedie~\cite{meyn1993stability}. The result states follow:\\
	If there exists a radially unbounded function\footnote{A function is radially unbounded if $||x||\to \infty \implies f(x)\to \infty$ for any norm on $\mathbb{R}^n$} $V: \mathcal{I}\times\tilde{\mathcal{Q}}\to\mathbb{R}_{\geq0}$, such that for some $c>0, d<\infty$
	\begin{align}\label{drift}
	\mathcal{L}V(i,q)\leq -c V(i,q)+d
	\end{align}
	then, for any initial condition $(i,q)\in \mathcal{I}\times \tilde{\mathcal{Q}}$,
	\begin{equation}\label{V_bound}
	\begin{aligned} 
	\limsup_{t\to\infty}\frac{1}{t}\int_{t}^{\tau = 0}\mathbb{E}[V(I(t),Q(t))|I(0)=i,Q(0)=q]\d\tau\\
	\leq d/c,
	\end{aligned}
	\end{equation}
where $V:\, \mathcal{I}\times \mathcal{Q}\to \mathbb{R}_+$ is called Lyapunov function, and condition (\ref{drift}) is called drift condition. By defining Lyapunov function as $V(i,q) = a_i\exp(bh^Tq) $, where $h$ is element-wise positive. It can be seen that (\ref{V_bound}) implies stability (\ref{stable}).

	Consider the Lyapunov function:
\begin{align} \label{lyapunov function}
V(i,q) = \alpha_i\exp(\beta h^Tq) 
\end{align}
where $s = [2,1]^T$.By(\ref{infinitesimal}) and (\ref{drift}) gives:
\begin{equation}\label{v+c}
\begin{aligned}
&\mathcal{L}V(i,q) = \\
&\big(\alpha_i\beta h^TF(i,q)+\sum_{j\in \mathcal{I}}\lambda_{ij}(\alpha_j-\alpha_i)\big)\frac{V(i,q)}{a_i},\\
&\mathcal{L}V(i,q) +c V(i,q) = \\
&\big(c \alpha_i+\alpha_i\beta h^TF(i,q)
+\sum_{j\in \mathcal{I}}\lambda_{ij}(\alpha_j-\alpha_i)\big)\frac{V(i,q)}{\alpha_i}.
\end{aligned}
\end{equation}
Maximizing left hand side of second equation in (\ref{v+c}) over $q\in \tilde{Q}$:
\begin{equation}
\begin{aligned}
&\max_{q\in \tilde{Q}} (\mathcal{L}V(i,q) +c V(i,q)) \\
&= \max_{q\in \tilde{Q}} \big(c \alpha_i+2\alpha_i\beta a-\alpha_i\beta f_{12}(i,q)-\alpha_i\beta f_2(i,q)\\&+\sum_{j\in \mathcal{I}}\lambda_{ij}(\alpha_j-\alpha_i)\big)\frac{V(i,q)}{\alpha_i}\\
&=\big( c \alpha_i+2\alpha_i \beta a-\alpha_i \beta\min_{q\in \tilde{Q}}(f_{12}(i,q)+f_2(i,q))\\&+\sum_{j\in \mathcal{I}}\lambda_{ij}(\alpha_j-\alpha_i)\big)\frac{V(i,q)}{\alpha_i}.
\end{aligned}
\end{equation}
We first consider for $q\in\tilde{\mathcal{Q}}_1$. Minimization on $\tilde{Q}_1$ can be easily done since the solution of this problem must lie on the boundaries of $\tilde{\mathcal{Q}}_1$. To see this, by definition (\ref{flow_diagram})
\begin{equation}
\begin{aligned}
f_{12}(i,q)+f_2(i,q) = \min\{vq_1,c_{1i},w(\theta - q_2)\} \\
+ \min\{vq_2,c_{2i}\}.
\end{aligned}
\end{equation}
This is a piecewise linear function with respect to $q_1$ with positive coefficient, therefore monotonously increasing with respected to $q_1$ with upper bound; we have $\argmin_{q_1\in[q_c,\infty]} ( f_{12}(i,q)+f_2(i,q)) = q_c$. The above function is a sum of two upper bounded concave function with respect to $q_2$,  therefore the $\argmin_{q_2\in[\underline{q_2},\bar{q_2}]} ( f_{12}(i,q)+f_2(i,q)) \in \{\underline{q_2},\bar{q_2}\}$. Thus the minimization problem can be solved by enumeration. Let $c = \frac{1}{\max \alpha_i}$, $d = 0$, then (\ref{drift}) is satisfied.\\
For $q\in \tilde{Q}_2$, we have:
\begin{equation}
\begin{aligned}
&\max_{q\in \tilde{Q}_2} (\mathcal{L}V(i,q) +c V(i,q))  \\
=& \big( c \alpha_i+2\alpha_i \beta a-\alpha_i \beta\mathcal{F}_2(i)
+\sum_{j\in \mathcal{I}}\lambda_{ij}(\alpha_j-\alpha_i)\big)\frac{V(i,q)}{\alpha_i}\\
\leq& | c \alpha_i+2\alpha_i \beta a-\alpha_i \beta \mathcal{F}_2(i)+\sum_{j\in \mathcal{I}}\lambda_{ij}(\alpha_j-\alpha_i)|\frac{V(i,q)}{\alpha_i}.
\end{aligned}
\end{equation}
Define:
\begin{equation}
\begin{aligned}
d =&\frac{\max_{q\in \tilde{Q}_2, i\in \mathcal{I}}V(i,q)}{\min_{i\in \mathcal{I}}\alpha_i}\\
 \times&\max_{i\in \mathcal{I}} | c \alpha_i+2\alpha_i \beta a-\alpha_i \beta {\mathcal{F}}_2(i)+
\sum_{j\in \mathcal{I}}\lambda_{ij}(\alpha_j-\alpha_i)|,
\end{aligned}
\end{equation}
then  (\ref{drift}) is satisfied. $V$ and ${\mathcal{F}}_2$ are bounded on $\tilde{Q}_2$, therefore maximizations are well defined.

\subsection{Proof for Proposition \ref{Prop_merge_queue}} \label{proof prop 2}
		For $q_1 = \underline{q_1}$, we have:
\begin{equation}
\begin{aligned}
\dot{Q}_1(i,\underline{q_1}) &= a-\min(v\underline{q_1},c_{1i}, \frac{q_1}{q_1+q_2}w(\theta-q_3))\\
&\geq a-v\underline{q_1} = 0.
\end{aligned}
\end{equation}
The proof for second equality is similar. The proof for third equality is slight different. By definition,
\begin{equation}
\begin{aligned}
&\dot{Q_3}(t,i) = \min\{q_1, \frac{q_1}{q_1+q_2}w(\theta-q_3), c_{1i}\}+\\
&\min\{vq_2, \frac{q_2}{q_1+q_2}w(\theta-q_3), c_{2i}\}- \min\{vq_3,c_{3i}\}
\end{aligned}
\end{equation}
By the property of minimum operator, we have:
\begin{equation}
\begin{aligned}
\dot{Q_3}(t,i)=&{} \min\{vq_1+vq_2,vq_1+c_{2i},vq_2+c_{1i},vq_3,\\
&c_{1i}+c_{2i},vq_1+\frac{q_2}{q_1+q_2}w(\theta-q_3),\\
&vq_2+\frac{q_1}{q_1+q_2}w(\theta-q_3)\}- \min\{vq_3,c_{3i}\}
\end{aligned}
\end{equation}
For $Q\in \tilde{\mathcal{Q}}$, we have:
\begin{equation}
\begin{aligned}
\dot{Q_3}(t,i) \geq&{} \min\{v\underline{q_1}+v\underline{q_2},v\underline{q_1}+c_{2i},v\underline{q_2}+c_{1i},v\underline{q_3},\\
&c_{1i}+c_{2i}, vq_1+\frac{q_2}{q_1+q_2}w(\theta-\underline{q_3}),\\
&vq_2+\frac{q_1}{q_1+q_2}w(\theta-\underline{q_3})\}- \min\{vq_3,c_{3i}\}
\end{aligned}
\end{equation}
By (\ref{critic flow}), $v\underline{q_3}\leq w(\theta -\underline{q_3})$. Along with (\ref{invar_3q}), we have:
\begin{equation}
\begin{aligned}
\underline{q_3}&\leq \underline{q_1}+\underline{q_2}\\
&\leq q_1+q_2.\\
\end{aligned}
\end{equation}
Then,
\begin{equation}
\begin{aligned}
\frac{vq_1}{q_1+q_2}\underline{q_3}&\leq vq_1,\\
(1-\frac{q_2}{q_1+q_2})v	\underline{q_3} &\leq vq_1,\\
v\underline{q_3} & \leq vq_1+v\frac{q_2}{q_1+q_2}\underline{q_3}.
\end{aligned}
\end{equation}
Similarly, $v\underline{q_3}  \leq vq_2+v\frac{q_1}{q_1+q_2}\underline{q_3}$, therefore by third equality in (\ref{invar_3q}):
\begin{align}
\dot{Q_3}(t,i)&\geq v\underline{q_3}-v\underline{q_3} \geq 0.
\end{align}	
For the upper boundary, we have:
\begin{equation}
\begin{aligned}
\dot{Q_3}(t,i) &\leq \frac{q_1}{q_1+q_2}w(\theta-\underline{q_3})+\frac{q_2}{q_1+q_2}w(\theta-\underline{q_3}) \\
& - \min\{vq_3,c_{3i}\}\leq w(\theta - \underline{q_3})-c_3^{\min}\leq 0.
\end{aligned}
\end{equation}

\bibliographystyle{unsrt}
\bibliography{reference}
\end{document}